\documentclass[reqno]{amsart}

\usepackage{ifthen,version}
\newboolean{arxiv-version}
\setboolean{arxiv-version}{true}
\ifthenelse{\boolean{arxiv-version}}{\includeversion{prnt-arxiv}}%
{\excludeversion{prnt-arxiv}}
\ifthenelse{\boolean{arxiv-version}}{\excludeversion{prnt-cready}}%
{\includeversion{prnt-cready}}

\usepackage{booktabs} 
\usepackage{IEEEtrantools}
\usepackage{amssymb,latexsym,amsfonts,amsmath,amsthm}
\usepackage{thmtools, thm-restate}

\usepackage{mdframed,lipsum}

\usepackage{graphicx}

\usepackage{mathrsfs}

\usepackage{graphicx}
\usepackage{paralist}
\usepackage{dsfont}
\usepackage{eso-pic}
\usepackage{epsfig}
\usepackage{mathtools}
\usepackage{epstopdf}

\topmargin  = 0.0 in
\leftmargin = 0.9 in
\rightmargin = 1.0 in
\evensidemargin = -0.10 in
\oddsidemargin =  0.10 in
\textheight = 8.5 in
\textwidth  = 6.6 in
\setlength{\parskip}{2mm}
\setlength{\parindent}{0mm}

\usepackage{wrapfig}

\setlength{\floatsep}{0pt}
\setlength{\floatsep}{0pt}
\setlength{\textfloatsep}{0pt}
\setlength{\intextsep}{0pt}
\setlength{\dbltextfloatsep}{0pt}
\setlength{\dblfloatsep}{0pt}
\setlength{\abovecaptionskip}{0pt}
\setlength{\belowcaptionskip}{0pt}
\setlength{\abovedisplayskip}{3pt}
\setlength{\belowdisplayskip}{3pt}

\DeclareMathOperator*{\opt}{opt}

\newcommand{\Rwd}{\mathsf{R}}

\newcommand{\N}{{\mathbb{N}}}

\newcommand{\EE}{\mathds{E}}
\newcommand{\PP}{\mathds{P}}

\usepackage[many]{tcolorbox}
\tcbuselibrary{skins}

\newtcolorbox{resp}[1][]{%
	enhanced jigsaw,%
	colback=gray!5!white,%
	colframe=gray!80!black,%
	size=small,%
	boxrule=1pt,%
	halign title=flush center,%
	coltitle=black,%
	breakable,%
	drop shadow=black!50!white,%
	attach boxed title to top left={xshift=1cm,yshift=-\tcboxedtitleheight/2,yshifttext=-\tcboxedtitleheight/2},%
	minipage boxed title=3cm,%
	boxed title style={%
		colback=white,%
		size=fbox,%
		boxrule=1pt,%
		boxsep=2pt,%
		underlay={%
			\coordinate (dotA) at ($(interior.west) + (-0.5pt,0)$);
			\coordinate (dotB) at ($(interior.east) + (0.5pt,0)$);
			\begin{scope}[gray!80!black]
				\fill (dotA) circle (2pt);
				\fill (dotB) circle (2pt);
			\end{scope}
		}%
	},%
	#1%
}

\usepackage{tikz,circuitikz}
\usetikzlibrary{patterns,calc,angles,quotes,automata,positioning,decorations.markings,arrows,intersections,circuits.logic.US,shapes.arrows,shapes.gates.logic.US,backgrounds}
\usetikzlibrary{automata,positioning,decorations.markings,arrows,intersections,calc,shapes}

\colorlet{darkgreen}{green!40!black}
\colorlet{darkblue}{blue!60!black}
\colorlet{darkred}{red!50!black}
\colorlet{safecellcolor}{yellow!5}
\colorlet{goodcellcolor}{green!10}
\colorlet{badcellcolor}{blue!10}

\tikzset{
	>=stealth,
	box state/.style={draw,rounded corners,rectangle,minimum size=8mm},
	prob state/.style={draw,very thick,shape=circle,darkblue,minimum size=3mm,inner sep=0mm},
	node distance=2cm,on grid,auto, initial text=,
	every loop/.style={shorten >=0pt},
	accepting/.style={double distance=1.2pt, outer sep = 0.6pt+\pgflinewidth},
	accepting dot/.style={above=-2.5pt,circle,fill,darkgreen,inner sep=2pt,radius=1pt},
	loop above/.append style={every loop/.append style={out=120, in=60, looseness=5}},
	loop below/.append style={every loop/.append style={out=300, in=240, looseness=5}},
	loop left/.append style={every loop/.append style={out=210, in=150, looseness=5}},
	loop right/.append style={every loop/.append style={out=30, in=330, looseness=5}},
	accepting arc/.style={dashed},
	marked/.style={
		dashed,
		opacity=0.3
	},
	marked on/.style={alt=#1{marked}{}},
}
\ctikzset{bipoles/not port/width=0.75,bipoles/not port/height=0.5}
\ctikzset{bipoles/not port/circle width=0.3}
\ctikzset{tripoles/american nand port/circle width=0.2}

\usepackage{algorithm}
\usepackage{algorithmic}

\usepackage{xspace}

\newcommand{\Nat}{{\mathbb{N}}}

\newcommand{\Real}{{\mathbb{R}}}

\newcommand{\Mm}{\mathcal{M}}
\newcommand{\Gg}{\mathcal{G}}
\newcommand{\Bb}{\mathcal{B}}

\newcommand{\alphabet}{\mathsf{\Sigma}_{\textsf{a}}}

\newcommand{\Lab}{\mathsf{L}}

\newcommand{\until}{\mathbin{\sf U}}

\newcommand{\nex}{\mathord{\bigcirc}}

\setcounter{dbltopnumber}{3}

\newcommand{\seq}[1]{\langle #1 \rangle} 

\def\BibTeX{{\rm B\kern-.05em{\sc i\kern-.025em b}\kern-.08em
		T\kern-.1667em\lower.7ex\hbox{E}\kern-.125emX}}

\newcommand{\mMIN}{\mathrm{Min}}
\newcommand{\mMAX}{\mathrm{Max}}

\newtheorem{theorem}{Theorem}[section]

\newtheorem{problem}[theorem]{Problem}
\newtheorem{proposition}[theorem]{Proposition}

\newtheorem{definition}[theorem]{Definition}

\newtheorem{remark}[theorem]{Remark}

\linespread{1.5}

\usepackage{fancyhdr}

\newenvironment{nouppercase}{%
	\renewcommand{\uppercasenonmath}[1]{}}{}

\begin{document}

\begin{abstract}
We propose a compositional approach to synthesize policies for networks of continuous-space stochastic control systems with unknown dynamics using model-free reinforcement learning (RL). 
The approach is based on \emph{implicitly} abstracting each subsystem in the network with a finite Markov decision process with \emph{unknown} transition probabilities, synthesizing a strategy for each abstract model in an assume-guarantee fashion using RL, and then mapping the results back over the original network with \emph{approximate optimality} guarantees. 
We provide lower bounds on the satisfaction probability of the overall network based on those over individual subsystems.  
A key contribution is to leverage the convergence results for adversarial RL (minimax Q-learning) on finite stochastic arenas to provide control strategies maximizing the probability of satisfaction over the network of continuous-space systems. 
We consider \emph{finite-horizon} properties expressed in the syntactically co-safe fragment of linear temporal logic.
These properties can readily be converted into automata-based reward functions, providing scalar reward signals suitable for RL.
Since such reward functions are often sparse, we supply a potential-based \emph{reward shaping} technique to accelerate learning by producing dense rewards. The effectiveness of the proposed approaches is demonstrated via two physical benchmarks including regulation of a room temperature network and control of a road traffic network.
\end{abstract}

\title{{\LARGE{Compositional Reinforcement Learning for Discrete-Time Stochastic Control Systems}}}

\author{{\bf {\large Abolfazl Lavaei$^1$}}}
\author{{\bf {\large Mateo Perez$^2$}}}
\author{{\bf {\large Milad Kazemi$^1$}}}
\author{{\bf {\large Fabio Somenzi$^3$}}}
\author{{\bf {\large Sadegh Soudjani$^1$}}}
\author{{\bf {\large Ashutosh Trivedi$^2$}}}
\author{{\bf {\large Majid Zamani$^{2,4}$}}\\
{\normalfont $^1$School of Computing, Newcastle University, United Kingdom}\\
{\normalfont $^2$Department of Computer Science, University of Colorado Boulder, USA}\\
{\normalfont $^3$Electrical, Computer \& Energy Engineering, University of Colorado Boulder, USA}\\
{\normalfont $^4$Department of Computer Science, LMU Munich, Germany}\\
\texttt{\{abolfazl.lavaei,m.kazemi2,sadegh.soudjani\}@newcatle.ac.uk}
\texttt{\{mateo.perez,fabio,ashutosh.trivedi,majid.zamani\}@colorado.edu}}

\pagestyle{fancy}
\lhead{}
\rhead{}
\fancyhead[OL]{A. Lavaei, M. Perez, M. Kazemi, F. Somenzi, S. Soudjani, A. Trivedi, M. Zamani}

\fancyhead[EL]{Compositional Reinforcement Learning for Discrete-Time Stochastic Control Systems} 
\rhead{\thepage}
\cfoot{}

\begin{nouppercase}
	\maketitle
\end{nouppercase}

\section{Introduction}
As edge-computing coupled with the internet-of-things continue to transform the critical infrastructure (e.g., traffic networks and power grids), there is ever-increasing demands on automatic control synthesis for interconnected networks of continuous-space stochastic systems. Often closed-form models for such physical environments are either unavailable or too complex to be of practical use, rendering model-based design approaches impractical.
Although system identification techniques~\cite{cheng2019end,jagtap2020control} can be employed to learn an approximate model, acquiring accurate models for large and complex systems remains challenging, time-consuming, and expensive. 
Model-free reinforcement learning (RL)~\cite{Sutton18} are sampling-based approach to synthesize controllers that compute the optimal policies without constructing a full model of the system, and hence are asymptotically more space-efficient than model-based approaches. 
\emph{We develop convergent model-free reinforcement learning for the controller synthesis of networks of (unknown) continuous-space stochastic systems against formal requirements.}

Abstraction-based synthesis \cite{HS18_robust,SSoudjani,lahijanian2016iterative,luo2021abstraction,SA13,majumdar2021symbolic} is an effective approach for the synthesis of continuous-space stochastic systems.
The recipe of the abstraction-based synthesis has the following three steps: {\bf abstraction} (an approximation of the dynamics to a finite-state Markov Decision Process or MDP), {\bf policy synthesis} (synthesis of an optimal policy for the abstract model), and {\bf policy transfer} (a translation of the results back to the original system, while establishing bounds on the error due to the abstraction process).
In a previous work~\cite{lavaei2020ICCPS}, we combined the guarantees from the abstraction-based synthesis with the RL convergence guarantees for finite-state MDPs to provide an RL algorithm for MDPs with uncountable state sets while providing convergence guarantees. This approach enabled us to apply model-free, off-the-shelf RL algorithms to compute $\varepsilon$-optimal strategies for continuous-space MDPs with a precision $\varepsilon$ that is defined a-priori and without explicitly constructing finite abstractions.
As these aforementioned abstraction-based synthesis approaches \cite{SSoudjani,lahijanian2016iterative, luo2021abstraction,SA13,majumdar2021symbolic} rely on state-space discretization, they severely suffer from the curse of dimensionality.
To mitigate this issue, \emph{compositional} abstraction-based techniques have been introduced to construct finite abstractions of large systems based on those of smaller subsystems~\cite{lavaei2018ADHSJ,lavaei2018CDCJ,lavaei2019Thesis,Lavaei_Survey,Lavaei_TAC2022}. 
This paper exploits the network structure to scale RL-guided abstraction-based synthesis for networks of continuous-state MDPs. 

\vspace{1em}\noindent\textbf{Contributions.} 
This paper studies RL-based controller synthesis when the environment is a network of systems. 
Instead of analyzing the whole network monolithically, and hence facing the scalability barrier, this paper investigates a compositional approach that solves the optimization problem for each subsystem in the network separately, while considering the other subsystems as adversaries in a two-player game.
In particular, we present a compositional approach to scale RL to synthesize policies against \emph{finite-horizon} specifications in networks of (partially) unknown stochastic systems while providing convergence guarantees. 
Our approach is applicable to uncountable, but bounded, state sets with finite input sets, and requires the knowledge of the Lipschitz constants of the subsystems.
Since the transition probabilities remain unknown, we employ the convergent multi-agent RL \cite{Littma96} to synthesize strategies. 

We utilize a closeness guarantee between probabilities of satisfaction
by subsystems and their implicit finite MDPs (which can be chosen
a-priori), and leverage convergence results of minimax-Q learning
\cite{Littma96} for solving stochastic games on finite MDPs. We
provide, for the first time, a theoretical lower bound on the
probability of satisfaction of finite-horizon properties by the original interconnected continuous-space
stochastic system with unknown dynamics in terms of the bounds computed for subsystems. We also propose a novel potential-based reward shaping technique to produce dense rewards, which is based on the structure of the automata representing the specifications of interest. Finally, we demonstrate our approach on two case studies.

\noindent{\bf Related Work.} A model-free RL framework for
synthesizing policies for unknown, and possibly continuous-state, stochastic systems is
presented
in~\cite{hasanbeig2019reinforcement,yuan2019modular, kazemi2020formal,wang2020continuous}. Our
proposed approaches here differ from the ones
in~\cite{hasanbeig2019reinforcement,yuan2019modular, wang2020continuous} in
two main directions. First, the  proposed approaches
in~\cite{hasanbeig2019reinforcement,yuan2019modular}
provide theoretical guarantees only if the underlying system has finitely many states.
In contrast, we learn
$\varepsilon$-optimal strategies for original continuous-space systems with a-priori defined precision $\varepsilon$.
In addition, we propose a
compositional RL framework for the policy synthesis of \emph{networks} of continuous-space stochastic systems, whereas the results
in~\cite{hasanbeig2019reinforcement,yuan2019modular, kazemi2020formal,wang2020continuous} only
deal with monolithic systems.

Our solution approach is related to our prior work \cite{lavaei2020ICCPS} where we develop RL-guided abstraction-based synthesis approach for continuous-state stochastic control systems. The present work differs from~\cite{lavaei2020ICCPS} in several directions.
First and foremost, the results in~\cite{lavaei2020ICCPS} only deal with monolithic systems and, hence, suffer from the curse of dimensionality. In contrast, we propose here a compositional RL framework for networks of continuous-space stochastic systems by breaking the main synthesis problem into simpler ones. As the second extension, we propose here a multi-level discretization scheme for RL in which the agent learns control policies on a sequence of finer and finer discretizations of the same system. We show that this improves learning efficiency while preserving convergence results. 

Finally, our theoretical results extend the abstraction error analysis of \cite{SA13} from single-player to multi-player controller synthesis, and from monolithic systems to network of systems. In particular, the error quantification is performed for max-min optimizations on local systems and provides a lower bound after composing them in a network (cf. Theorem~\ref{thm:com}).

\section{Preliminaries}
\label{prelim}
We write $\Nat$ and $\Real$ for the set of natural and real numbers.
For a set of $N$ vectors $x_i \in \mathbb R^{n_i}$, $1 {\leq} i {\leq} N$, we write $[x_1;\ldots;x_N]$ to denote the corresponding column vector of
dimension $\sum_i n_i$. We denote by $\mathbf{0}_n$ a column vector of all zeros in $\mathbb R^{n}$.
Given functions $f_i:X_i{\rightarrow} Y_i$, for $1 {\leq} i{\leq} N$, their product $\bigtimes_{i=1}^{N}f_i\!:\bigtimes_{i=1}^{N}X_i{\rightarrow}\bigtimes_{i=1}^{N}Y_i$ is defined as $(x_1,\ldots,x_N) {\mapsto} [f_1(x_1);\ldots;f_N(x_N)]$.
We represent a diagonal matrix with $\sigma_1,\ldots,\sigma_n$ as its entries as $\operatorname{diag}(\sigma_1,\ldots,\sigma_n)$.

A probability space is a tuple $(\Omega,\mathcal F_{\Omega},\mathds{P}_{\Omega})$
where $\Omega$ is the sample space, $\mathcal F_{\Omega}$ is a $\sigma$-algebra on
$\Omega$ comprising subsets of $\Omega$ as events, and $\mathds{P}_{\Omega}$ is
a probability measure that assigns probabilities to events.
We assume that random variables are measurable
functions of the form
$X:\Omega \to S_X$. 
Any random variable $X$ induces a probability measure on  its space
$(S_X,\mathcal F_X)$ as $Prob\{A\} = \mathds{P}_{\Omega}\{X^{-1}(A)\}$ for any
$A\in \mathcal F_X$.
A \emph{discrete probability distribution}, or just distribution, over a
set $X$ is a function $d : X {\to} [0, 1]$ such that 
$\sum_{x \in X} d(x) = 1$.
A topological space $S$ is called a {\it Borel space} if it is homeomorphic to a Borel
subset of a Polish space (\emph{i.e.,} a separable and completely metrizable
space). 
Any Borel space $S$ is assumed to be endowed with a Borel $\sigma$-algebra, which is
denoted by $\mathcal B(S)$. We say that a map $f : S\rightarrow Y$ is measurable
whenever it is Borel measurable. 

\subsection{Discrete-Time Stochastic Control Systems}

We consider networks of stochastic control systems in discrete time where each subsystem, a discrete-time stochastic control system, is defined as follows.	
\begin{definition}
	A {\it discrete-time stochastic control system} (dt-SCS) is a tuple 
	\begin{equation}\label{eq:dt-SCS1}
	\Sigma=\left(X,U,W,\varsigma,f,Y, h\right)\!,
	\end{equation}
	where:
	\begin{itemize}
		\item
		$X\subseteq \mathbb R^n$, a Borel space, is the state set of the
		system. 
		\item 
		$U$ is the \emph{external} input set which is finite; 
		\item 
		$W\subseteq \mathbb R^p$ is the \emph{internal} input set;
		\item
		$\varsigma$ is a sequence of independent and identically distributed random variables from a sample space $\Omega$ to the set
		$\mathcal V_\varsigma$, namely	 
		$\varsigma:=\{\varsigma(k):\Omega\rightarrow \mathcal V_{\varsigma},\,\,k\in\N\}$;
		\item
		$f:X\times U \times W \times \mathcal V_{\varsigma} \rightarrow X$ is a measurable function characterizing the state evolution of $\Sigma$;
		\item
		$Y\subseteq \mathbb R^q$ is the output set;
		\item
		$h: X {\rightarrow} Y$, a measurable function, maps states to outputs. 
	\end{itemize}
	We write $\PP\{f(x, v, w, \cdot) \in B  \mid x, u, w\}$ for the probability that the next state is in $B \in \Bb(X)$ given current state $x {\in} X$, external input $u {\in} U$, internal input $w {\in} W$, when the remaining argument is distributed like the random variables in $\varsigma$.
\end{definition}

The execution of $\Sigma$ from $x(0){\in} X$, and inputs $\{\upsilon(k): \Omega{\rightarrow} U,\,\,k{\in}\mathbb N\}$ and $\{w(k):\Omega{\rightarrow} W,\,\,k{\in}\mathbb N\}$ is described by: 
\begin{equation}\label{Eq_1a1}
\Sigma\!:\left\{\hspace{-1.5mm}
\begin{array}{l}
x(k+1)=f(x(k),\upsilon(k),w(k),\varsigma(k)),\\
y(k)=h(x(k)),\\
\end{array}\right.
\quad k\in\mathbb N.
\end{equation}

We also consider special subclass of dt-SCS, called closed dt-SCS, where the internal inputs are absent, \emph{i.e.,}  when $w(k)=0$, $\forall k\in \N$.
Such systems may also result from considering an interconnection of dt-SCSs  (cf. Definition~\ref{interconnected}).
We represent closed dt-SCS as 
$(X,U,\varsigma,f)$ and its execution can be simplified to 
\begin{equation}\label{eq:dt-SCS}
\Sigma\!:\left\{\hspace{-1.5mm}
\begin{array}{l}
x(k+1)=f(x(k),\upsilon(k),\varsigma(k)),\\
y(k)=h(x(k)),\\
\end{array}\right.
\quad k\in\mathbb N.
\end{equation}
For a closed dt-SCS, we write $\PP\{f(x, v, \cdot) \in B  \mid x, u\}$ for the probability that the next state is in $B$ given the current state $x \in X$ and input $u \in U$. We call a non-closed dt-SCS open.
When clear, we drop the open or closed specifier. 

\begin{definition}[Network of dt-SCS]\label{interconnected}
	Let $\Sigma_i=(X_i,U_i,W_i,\varsigma_i, f_i, Y_i,h_i)$, for $ 1 \leq i \leq N$, be a family of $N$ open dt-SCS.
	The network of $\seq{\Sigma_i}_{1\leq i \leq N}$ is defined by
	the interconnection map $g:\bigtimes_{i=1}^{N}Y_i\to\bigtimes_{i=1}^{N}W_i$,
	and gives rise to a closed dt-SCS 
	$\mathcal{I}_g(\Sigma_1,\ldots,\Sigma_N) = (X,U,\varsigma, f)$, where 
	$X:=\bigtimes_{i=1}^{N}X_i$,  $U:=\bigtimes_{i=1}^{N}U_i$, and
	$f:=\bigtimes_{i=1}^{N}f_{i}$, subjected to the following interconnection constraint: 
	\begin{align}
	[w_1;\ldots;w_N]=g(h_{1}(x_1),\dots,h_{N}(x_N)).
	\end{align}
\end{definition}

An example of the interconnection of two stochastic control subsystems $\Sigma_1$ and $\Sigma_2$ is illustrated in Fig.~\ref{Figg1}.

\begin{figure}[t]
	\begin{center}
		\begin{tikzpicture}[>=latex']
		\tikzstyle{block} = [draw, 
		thick,
		rectangle, 
		minimum height=.8cm, 
		minimum width=1.5cm]
		
		\draw[dashed] (-1.7,-2.2) rectangle (1.7,.7);
		
		\node[block] (S1) at (0,0) {$\Sigma_1$};
		\node[block] (S2) at (0,-1.5) {$\Sigma_2$};
		
		\draw[->] ($(S1.east)+(0,0.25)$) -- node[very near end,above] {$x_{1}$} ($(S1.east)+(1.5,.25)$);
		\draw[<-] ($(S1.west)+(0,0.25)$) -- node[very near end,above] {$\upsilon_{1}$} ($(S1.west)+(-1.5,.25)$);
		
		\draw[->] ($(S2.east)+(0,-.25)$) -- node[very near end,below] {$x_{2}$} ($(S2.east)+(1.5,-.25)$);
		\draw[<-] ($(S2.west)+(0,-.25)$) -- node[very near end,below] {$\upsilon_{2}$} ($(S2.west)+(-1.5,-.25)$);
		
		\draw[->] 
		($(S1.east)+(0,-.25)$) -- node[very near end,above] {$h_{1}$}
		($(S1.east)+(.5,-.25)$) --
		($(S1.east)+(.5,-.5)$) --
		($(S2.west)+(-.5,.5)$) --
		($(S2.west)+(-.5,.25)$) -- node[very near start,below] {$w_{2}$}
		($(S2.west)+(0,.25)$) ;
		
		\draw[->] 
		($(S2.east)+(0,.25)$) -- node[very near end,below] {$h_{2}$} 
		($(S2.east)+(.5,.25)$) --
		($(S2.east)+(.5,.5)$) --
		($(S1.west)+(-.5,-.5)$) --
		($(S1.west)+(-.5,-.25)$) -- node[very near start,above] {$w_{1}$}
		($(S1.west)+(0,-.25)$) ;
		\end{tikzpicture}
	\end{center}
	\caption{Interconnection $\mathcal{I}(\Sigma_1,\Sigma_2)$ of stochastic control subsystems $\Sigma_1$ and $\Sigma_2$.}
	\label{Figg1}
\end{figure}

\subsection{Stochastic Games and Markov Decision Processes}
The semantics of an open dt-SCS $\Sigma$ can be naturally expressed as a stochastic game~\cite{Filar97} between two players---player Max (the control), who controls the external inputs $U$,  and player Min (the adversary), who controls the internal inputs.
We assume that the adversary is more powerful than the controller in that the adversary can see the choices of the controller at every step.
From the control synthesis perspective, this view results in a cautious controller with a pessimistic view of the environment. 
On the other hand, a strategy computed in this manner also works against the weaker adversary. 

A \emph{stochastic game arena} (SGA) is a tuple $\Gg = (S, A, T, S_{\mMAX}, S_{\mMIN})$ where:
$S$ is (potentially uncountable) state set;
$A$ is the set of actions and $A(s)$ is the set of actions enabled at $s \in S$;
$T: S \times A \times \mathcal{B}(S) {\to} [0,1]$ is a conditional stochastic kernel that, for $(s, a) {\in} S{\times} A$, assigns a probability measure $P(\cdot | s, a)$ on the measurable space $(S,\mathcal B(S))$.
$S_{\mMAX} \subseteq S$ and $S_{\mMIN} \subseteq S$  form a partition of $S$ into the set of states controlled by players Max and Min, respectively. 
For the stochastic kernel $T$, state $s \in S$, action $a \in A$, and set $B \in \Bb(S)$, we write $T(B \mid s, a)$ for $T(s, a, B)$.
We say that a SGA is finite, if both $S$ and $A$ are finite.
For finite SGAs, the transition function $T(s, a, \cdot)$ is a discrete probability distribution for every $s \in S$ and $a \in A$. For a finite SGA, we write $T(s' \mid s, a)$ for $T(s, a, s')$ for all $s, s' \in S$ and $a \in A$.
An SGA is an MDP if $S_\mMIN = \emptyset$ and represent it as $(S, A, T)$.
An MDP $(S, A, T)$ is finite if both $S$ and $A$ are finite. 

\begin{definition}[dt-SCS: Semantics]\label{dt-scs-semantics}
	An open dt-SCS $\Sigma=(X, U, W, \varsigma, f, Y, h)$ can be interpreted as an SGA
	$\mathcal{G}_\Sigma=(S, A, T, S_\mMAX, S_\mMIN)$, where 
	\begin{itemize}
		\item $S = X \cup (X \times U)$ such that $S_\mMAX = X$ and $S_\mMIN = X{\times} U$; 
		\item $A = U \cup W$ such that $A(s) = U$ for $s \in X$ and for $A(s) = W$ for $s \in X \times U$;
		\item $T: S \times A  \times \Bb(S) \to [0,1]$ such that 
		\begin{itemize}
			\item 
			$T((x, u) \mid x, u) = 1$ 
			for $x \in S_\mMAX$ and $u \in U$ 
			\item 
			$T(S\backslash\{(x, u)\} \mid x, u) = 0$ 
			for $x \in S_\mMAX$ and $u \in U$ 
			\item
			$T(B \mid (x, u), w) = \mathbb P \{ f(x, v, w, \cdot) \in B  \mid x, u, w\}$
			for all $(x, u) \in S_\mMIN$, $w \in W$, and all $B \in \Bb(X)$.
		\end{itemize}
	\end{itemize}
	
	Similarly, a closed dt-SCS $\Sigma = (X, U, \varsigma, f)$ can be \emph{equivalently} represented as an MDP
	$\Mm_\Sigma = (S=X, A=U, T)$ where $T:S \times A \times \Bb(S) \rightarrow[0,1]$
	such that for all $x \in X$, $u \in U$, and $B \in \Bb(X)$, we have that 
	\[
	T(B \mid x, u)	= \mathbb P \Big\{ f(x, u, \cdot) \in B \,\big|\, x, u\Big\}.
	\]
	Abusing notation, we write $\Sigma$ for its SGA $\Gg_\Sigma$ or MDP $\Mm_\Sigma$.
\end{definition}

The objective in an SGA is to determine a policy---a decision rule for every step to choose the next action---for both players that optimizes a given objective.
Although the policy could in general be history-dependent or randomized, w.l.o.g. we only consider memoryless, deterministic policies~\cite{ding2013stochastic}. We call such policies Markov policies.
For an SGA $\Gg$, a \emph{Markov policy} $\rho$ of player Max is a sequence $(\rho_0, \rho_1,\rho_2,\ldots)$ where each rule $\rho_k: S_\mMAX \to A$, for $k \in \Nat$, is a universally measurable function such that $\rho_n(s) \in A(s)$ for all $s \in S$. 
Similarly, a Markov policy $\xi$ of player Min is a sequence $(\xi_0, \xi_1,\xi_2,\ldots)$ where $\xi_k: S_\mMIN \to A$, for $k \in \Nat$, is a universally measurable function such that $\xi_k(s) \in A(s)$ for all $s_\mMIN \in S$.
We write $\Pi^\Gg_\mMAX$ and $\Pi^\Gg_\mMIN$ for the set of all Markov policies of players Max and Min, respectively.
For an MDP $\Mm$ we write $\Pi^\Mm$ for the set of policies.
We omit the superscripts $\Mm$ and $\Gg$ when clear from the context.

Any pair of Markov policies $\rho \in \Pi^\Gg_\mMAX$ and $\xi \in \Pi^\Gg_\mMIN$ and initial state $s \in S$ characterize a unique stochastic process over sequences of states and actions. We write $\Gg^{s}_{\rho,\xi}$ for this stochastic process and write $S_k$ and $A_k$ for the random variables corresponding to the state and action at time step $k \in \Nat$. 
We write $\EE^{s}_{\rho,\xi}[\cdot]$ for the expected value of a random variable for the stochastic process $\Gg^{s}_{\rho,\xi}$.
If we also condition on the initial action $a \in A$, we write $\Gg^{s,a}_{\rho,\xi}$ and $\EE^{s,a}_{\rho,\xi}[\cdot]$.
Similarly, we write $\Mm^{s}_{\rho}$ for the stochastic process of an MDP $\Mm$ with initial state $s$ and policy $\rho$, and $\EE^{s}_{\rho}[\cdot]$ for the corresponding expectation.

\subsection{Finite-Horizon Specifications}\label{property}
This paper deals with a fragment of linear temporal logic formulae known as \emph{syntactically co-safe linear temporal logic} (scLTL), in which the negation operator ($\neg$) only occurs before atomic propositions\cite{KupfermanVardi2001} characterized by the grammar:
\[
\varphi := p \mid \neg p \mid \varphi \vee \varphi \mid  \varphi \wedge \varphi \mid \nex \varphi \mid \varphi \until \varphi.
\]
We refer the reader to \cite{baier2008principles} for the syntax and semantics of LTL.
We denote the language of finite words associated with an scLTL formula
$\varphi$ by $\mathcal L_f(\varphi)$. 

A {\it deterministic finite automaton} (DFA) is a tuple
$\mathcal A \!=\! (Q, \alphabet, \mathsf{t}, q_0, F_{\textsf{a}})$,
where $Q$ is a finite set of states, $\alphabet$ is a finite alphabet, $\mathsf{t}: Q \times \alphabet \to Q$ is a transition function,
$q_0 \in Q$ is the initial state, and  $F_{\textsf{a}} \subseteq Q$ is a set of accepting states.
We write $\lambda$ for the empty word and $\alphabet^*$ for
the set of finite strings over $\alphabet$.
The extended transition function $\mathsf{\hat t}: Q \times \alphabet^* \to Q$
is defined as:
\begin{equation*}
\mathsf{\hat t}(q, \bar w) =
\begin{cases}
q, \!&~ \text{ if $\bar w = \lambda$},\\
\mathsf{t}(\mathsf{\hat t}(q, x), a), \!& ~\text{ if $\bar w = xa$ for $x\in\mathsf{\Sigma}_{\textsf{a}}^*$ and \!$a \!\in\! \mathsf{\Sigma}_{\textsf{a}}$.} 
\end{cases}
\end{equation*}
The language accepted by $\mathcal A$ is 
$\mathcal{L}(\mathcal A) \!=\! \{\bar w\!\in\! \alphabet^* \!\:|\:\! \mathsf{\hat t}(q_0, \bar w)) \!\in\! F_{\textsf{a}}\}$.
For verification and synthesis purposes, an scLTL formula $\varphi$ can be compiled into a \emph{deterministic finite automaton} $\mathcal A_{\phi}$
such that $\mathcal L_f(\varphi) = \mathcal L(\mathcal A_{\varphi})$ \cite{KupfermanVardi2001}.
The resulting DFA has an accepting state whose out-going transitions are all self-loops.
Such a DFA is known as a {\it co-safety automaton}.

\begin{remark}
	We assume that the DFA $\mathcal{A}_\varphi$ for an scLTL property $\varphi$ is a co-safety one.
	Some formulae of scLTL describe finite-horizon properties.  For example, $p \vee \nex\nex q$ only requires checking the first three letters of a word. Other properties, like $p \until q$ are satisfied by finite words of arbitrary length.
	We can adjoin a finite time horizon $\mathcal{T}$ to a formula $\varphi$ and stipulate that an infinite word satisfies $(\varphi,\mathcal{T})$ if it has a prefix of length at most $\mathcal{T}{+}1$ that is in $\mathcal{L}(\mathcal{A}_\varphi)$.
\end{remark}

\section{Problem Definition}
\label{key-results1}
We study a \emph{compositional approach} for the controller synthesis of \emph{networks} of unknown continuous-space stochastic systems under finite-horizon specifications. 
We apply a model-free \emph{two-player} RL in an assume-guarantee fashion and \emph{compositionally} compute policies over finite horizons for original networks without explicitly constructing their finite abstractions. We then propose a lower bound for the optimality of synthesized controllers when applied to the interconnected system based on those of individual controllers.

\subsection{Compositional Controller Synthesis}
In order to provide any formal guarantee, we assume that the system in \eqref{Eq_1a1}
is
Lipschitz-continuous with respect to states and internal inputs with constants $\mathscr{H}_x$ and $\mathscr{H}_w$, respectively, which are the only required
knowledge about the system. An alternative way of having the Lipschitz constants $\mathscr{H}_x$ and $\mathscr{H}_w$ is to
estimate it from sample trajectories of $\Sigma$;
we refer the interested reader to~\cite[equation (3)]{lavaei2020ICCPS} for more details.

\begin{problem}[Compositional Synthesis]
	\label{problem1}
	Let $\varphi_i$ be finite-horizon objectives and
	$\Sigma_i=(X_i,U_i,W_i,\varsigma_i,f_i,Y_i, h_i)$ continuous-space
	subsystems, where $f_i,h_i$ and distribution of $\varsigma_i$ are
	unknown, but Lipschitz constants $\mathscr{H}_{x_i}$ and
	$\mathscr{H}_{w_i}$ are known, for $1 \leq i \leq N$. Synthesize
	Markov policies that satisfy $\varphi_i$ over $\Sigma_i$
	with probabilities within a-priori defined thresholds
	$\varepsilon_i$ from unknown optimal ones and establish a lower bound on the satisfaction probability for the interconnected system.
\end{problem}

To present our solution, we first report the following result~\cite{SA13} to show the closeness
between a continuous-space subsystem $\Sigma$ and its finite
abstraction $\widehat\Sigma$ in a probabilistic setting.
We then leverage it to provide a
two-player stochastic game RL-based solution to Problem~\ref{problem1}. Note that
all (Markov) policies for $\widehat{\Sigma}$ are also (Markov) policies for
$\Sigma$, \emph{i.e.,} $\widehat{\Pi}_{\mMAX} \subseteq \Pi_{\mMAX}$ and
$\widehat{\Pi}_{\mMIN} \subseteq \Pi_{\mMIN}$.

\begin{theorem}\label{thm:key-thm1} 
	Let $\Sigma\!=\!(X,U,W,\varsigma,f,Y,h)$ be a
	continuous-space subsystem and
	$\widehat\Sigma\!=\!(\hat X, \hat U,\hat W,\varsigma,\hat f,\hat Y,\hat
	h)$ be its finite abstraction. For a given finite-horizon objective $\varphi$,
	initial state $x \in X$, and Markov policy pair
	$(\rho,\xi) \in \widehat{\Pi}_{\mMAX} \times \widehat{\Pi}_{\mMIN}$ for the closed-loop
	$\widehat\Sigma$ (with solution process denoted by $\widehat\Sigma^x_{(\rho,\xi)}$), the
	closeness between $\Sigma$ and $\widehat\Sigma$ in terms of satisfaction probability is given by
	\begin{equation}\label{eq:metric_lit1}
	|\mathbb P(\Sigma^x_{(\rho,\xi)} \models \varphi) - \mathbb
	P(\widehat\Sigma^{\hat x}_{(\rho,\xi)} \models \varphi)| \le \varepsilon,
	~~~\text{with}~~~\varepsilon \!:=\! \mathcal T \mathscr{L}(\delta \mathscr{H}_{x} \!+\!
	\mu\mathscr{H}_{w}),
	\end{equation}
	where $\hat x = Q_x(x)$ is a uniform grid quantization map with distance $\delta$ between the grid lines, $\mathcal T$ is the time horizon, $\delta$ and $\mu$ are
	respectively state and internal input discretization parameters,
	$\mathscr{H}_{x}$ and $\mathscr{H}_{w}$ are respectively Lipschitz
	constants of the stochastic kernel w.r.t. states and
	internal inputs, and $\mathscr{L}$ is the Lebesgue measure of (bounded) state set. The selection of $\delta$ and $\mu$ offers a trade-off---decreasing $\delta$ and $\mu$ leads to a finer abstraction with a smaller discretization error, but at the cost of increasing the size of the finite abstraction.
\end{theorem}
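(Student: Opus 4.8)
The plan is to convert the scLTL satisfaction probability into a finite-horizon reachability value on a product automaton, and then to propagate the single-step abstraction error through the corresponding backward dynamic-programming recursion by induction on the horizon. First I would take the co-safety DFA $\mathcal{A}_\varphi=(Q,\alphabet,\mathsf{t},q_0,F_{\textsf{a}})$ with $\mathcal L_f(\varphi)=\mathcal L(\mathcal A_\varphi)$ and form the product of each system with it. Since the policy pair $(\rho,\xi)\in\widehat\Pi_\mMAX\times\widehat\Pi_\mMIN$ is fixed (and is legitimate for $\Sigma$ because $\widehat\Pi_\mMAX\subseteq\Pi_\mMAX$ and $\widehat\Pi_\mMIN\subseteq\Pi_\mMIN$), both $\Sigma^x_{(\rho,\xi)}$ and $\widehat\Sigma^{\hat x}_{(\rho,\xi)}$ reduce to time-inhomogeneous Markov chains on $X\times Q$. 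Within horizon $\mathcal T$, the event $\{{\models}\,\varphi\}$ coincides with the automaton component reaching $F_{\textsf{a}}$, so the satisfaction probability is a finite-horizon reachability probability. This admits the standard backward characterization: set $V_0(x,q)=\mathbf 1_{F_{\textsf{a}}}(q)$ and, for $1\le k\le\mathcal T$,
\[
V_k(x,q)=\mathbf 1_{F_{\textsf{a}}}(q)+\mathbf 1_{Q\setminus F_{\textsf{a}}}(q)\int_X V_{k-1}\bigl(x',\mathsf{t}(q,\Lab(h(x')))\bigr)\,T(dx'\mid x,u,w),
\]
where $u,w$ are the external and internal inputs prescribed by $(\rho,\xi)$ at that step; defining $\widehat V_k$ analogously from the finite kernel, one has $\mathbb P(\Sigma^x_{(\rho,\xi)}\models\varphi)=V_{\mathcal T}(x,q_0)$ and $\mathbb P(\widehat\Sigma^{\hat x}_{(\rho,\xi)}\models\varphi)=\widehat V_{\mathcal T}(\hat x,q_0)$.

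Next I would bound $e_k:=\sup_{x\in X,\,q\in Q}\bigl|V_k(x,q)-\widehat V_k(Q_x(x),q)\bigr|$ by induction on $k$. The base case $e_0=0$ is immediate since both value functions equal $\mathbf 1_{F_{\textsf{a}}}$. For the inductive step I split the one-step difference via the triangle inequality into (i) a \emph{kernel} term, obtained by keeping the integrand $V_{k-1}$ fixed and replacing the transition kernel of $\Sigma$ at $(x,u,w)$ by that of $\widehat\Sigma$ (equivalently, the continuous kernel evaluated at the grid point $\hat x=Q_x(x)$ and at the discretized internal input within $\mu$ of $w$), and (ii) an \emph{integrand} term, in which the same abstract kernel integrates $V_{k-1}-\widehat V_{k-1}$ and therefore contributes at most $e_{k-1}$.

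The crux is the kernel term. Writing $\bar t(\cdot\mid x,u,w)$ for the density of $T(\cdot\mid x,u,w)$ with respect to the Lebesgue measure and using the hypothesized Lipschitz continuity of this density in the state (constant $\mathscr H_x$) and in the internal input (constant $\mathscr H_w$), for any test function $0\le\phi\le1$ (playing the role of $V_{k-1}$) we get
\begin{align*}
\Bigl|\int_X \phi(x')\,\bar t(x'\mid x,u,w)\,dx'-\int_X \phi(x')\,\bar t(x'\mid \hat x,u,\hat w)\,dx'\Bigr|
&\le \int_X \bigl|\bar t(x'\mid x,u,w)-\bar t(x'\mid \hat x,u,\hat w)\bigr|\,dx'\\
&\le \mathscr L\bigl(\mathscr H_x\,|x-\hat x|+\mathscr H_w\,|w-\hat w|\bigr)
\le \mathscr L\bigl(\mathscr H_x\,\delta+\mathscr H_w\,\mu\bigr),
\end{align*}
since quantization displaces the state by at most $\delta$ and the internal input by at most $\mu$, and $\int_X dx'=\mathscr L$. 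Combining (i) and (ii) yields the linear recursion $e_k\le e_{k-1}+\mathscr L(\mathscr H_x\delta+\mathscr H_w\mu)$; unrolling it from $e_0=0$ over the $\mathcal T$ steps gives $e_{\mathcal T}\le\mathcal T\,\mathscr L(\mathscr H_x\delta+\mathscr H_w\mu)=\varepsilon$, and evaluating at $q=q_0$ delivers the claimed inequality \eqref{eq:metric_lit1}.

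I expect the main obstacle to be the bookkeeping of the internal input in the two-player setting, which is exactly where this statement extends the single-player analysis of \cite{SA13}: one must ensure that the lifted policy $\xi\in\widehat\Pi_\mMIN$ drives the two chains through \emph{compatible} internal inputs, so that the only discrepancy in the kernel term is the $\mu$-displacement quantified above, and that the output labelling is constant on grid cells so that the DFA transition $\mathsf{t}(q,\Lab(h(x')))$ appearing in the two recursions agrees. Granting measurability of the lifted policies and a labelling that refines the quantization partition, these identifications are routine, and the remainder is the error accumulation carried out above.
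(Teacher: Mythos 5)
The paper does not actually prove this theorem---it is imported from \cite{SA13} (``we first report the following result'')---so there is no internal proof to compare against; your reconstruction (product with the co-safety DFA, backward value iteration for finite-horizon reachability, a one-step bound $\mathscr L(\mathscr H_x\delta+\mathscr H_w\mu)$ on the perturbed kernel, and accumulation of the error over the $\mathcal T$ steps) is precisely the argument of that reference and is correct. The only caveats are the ones you already flag yourself: the quantization partition must refine the labelling so the DFA component evolves identically in both chains, and $\mathscr H_x,\mathscr H_w$ must be read as pointwise Lipschitz constants of the conditional \emph{density}, which is exactly what makes the Lebesgue measure $\mathscr L$ of the bounded state set appear after integrating the pointwise bound.
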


If the state set is unbounded, we assume that the specification requires the system to stay in a safe bounded subset of the state set, and this bounded subset can be used in the above theorem.
Next, we consider networks of stochastic control subsystems and provide a lower bound for the \emph{satisfaction probability} of synthesized controllers when applied to the network based on those of individual controllers applied to subsystems as in Theorem~\ref{thm:key-thm1}.

\begin{theorem}\label{thm:com}
	Let $\Sigma\!=\!(X,U,\varsigma,f)$ be an interconnected continuous-space stochastic control system and $\widehat\Sigma=(\hat X, \hat U,\varsigma,\hat f)$ be its finite abstraction. For a given finite-horizon objective $\varphi=\varphi_1\wedge\varphi_2\wedge\ldots\wedge\varphi_N$, we have
	\begin{equation}
	\label{eq:metric_com}
	\mathbb P(\Sigma^x_{\rho^*}\models\varphi) \ge
	\prod_{i=1}^N \min_{\xi_i \in \Pi^{\Sigma_i}}\mathbb P\left((\widehat\Sigma_i)^{x_i}_{(\rho_i^*, \xi_i)}\models\varphi_i\right){-} \frac{1}{2}[(1{+}\varepsilon)^N\!-\!(1{-}\varepsilon)^N],
	\end{equation}
	where $\varepsilon{:=}\max_{i}\varepsilon_i$, $x=[x_1;\ldots;x_N] {\in} X$ is the initial state of $\Sigma$, and $\rho^* \in \Pi^\Sigma$ is the policy composed of the optimal policies $\rho^*_i \in \Pi^{\Sigma_i}_\mMAX$ for objectives $\varphi_i$.
\end{theorem}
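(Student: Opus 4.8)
The plan is to split the network bound into two essentially independent ingredients—a per-subsystem closeness estimate and a compositional product inequality—and then to control the error incurred when the per-subsystem estimates are multiplied. Throughout, write $q_i := \min_{\xi_i\in\Pi^{\Sigma_i}}\mathbb{P}\big((\Sigma_i)^{x_i}_{(\rho_i^*,\xi_i)}\models\varphi_i\big)$ for the worst-case (over the internal-input adversary) satisfaction value of the \emph{concrete} subsystem under its optimal local controller $\rho_i^*$, and write $\hat q_i$ for the corresponding quantity on the abstraction $\widehat\Sigma_i$, i.e.\ the $i$-th factor on the right-hand side of \eqref{eq:metric_com}. The composed policy $\rho^*$ drives the network so that, from the viewpoint of subsystem $i$, the remaining subsystems supply its internal input $w_i$ through the interconnection map $g$; crucially, this closed-loop internal input is itself a particular Markov strategy for player Min, hence can never beat the worst case quantified by $q_i$.

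First I would lift Theorem~\ref{thm:key-thm1} from a fixed policy pair to the worst-case value. For every admissible adversary $\xi_i$, Theorem~\ref{thm:key-thm1} gives $|\mathbb{P}((\Sigma_i)^{x_i}_{(\rho_i^*,\xi_i)}\models\varphi_i)-\mathbb{P}((\widehat\Sigma_i)^{x_i}_{(\rho_i^*,\xi_i)}\models\varphi_i)|\le\varepsilon_i\le\varepsilon$, uniformly in $\xi_i$. Since $|\inf_\xi F(\xi)-\inf_\xi G(\xi)|\le\sup_\xi|F(\xi)-G(\xi)|$, taking the minimum over $\xi_i$ on both arguments yields $|q_i-\hat q_i|\le\varepsilon$, and in particular $q_i\ge\hat q_i-\varepsilon$.

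The heart of the argument, and the step I expect to be the main obstacle, is the compositional product lower bound $\mathbb{P}(\Sigma^x_{\rho^*}\models\varphi)\ge\prod_{i=1}^N q_i$. The difficulty is that the events $\{\varphi_i\}$ are \emph{not} independent—the subsystems are coupled through $g$—so a product of marginals is not a valid lower bound for arbitrary correlated events (strong anti-correlation would violate it). The resolution is to use the worst-case value rather than the true correlations, via a backward induction over the finite horizon on the product of the subsystem arenas together with the specification automata $\mathcal{A}_{\varphi_i}$; let $z_i$ denote the state of subsystem $i$ paired with the current state of $\mathcal{A}_{\varphi_i}$ (which reads only subsystem $i$'s output). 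Because the noise sequences $\varsigma_i$ are mutually independent, the one-step kernel of the closed network factorizes as $\prod_i T_i(z_i'\mid z_i,\rho_i^*(z_i),w_i)$ with each $w_i$ determined by the current joint state. Writing $V_i$ for subsystem $i$'s worst-case value function, which obeys its own Bellman equation $V_i(z_i)=\min_{w_i}\mathbb{E}[V_i(z_i')\mid z_i,\rho_i^*(z_i),w_i]$, the induction hypothesis $W\ge\prod_i V_i$ on the joint value $W$ propagates one step backward: the product kernel turns $\mathbb{E}[\prod_i V_i(z_i')]$ into $\prod_i\mathbb{E}[V_i(z_i')\mid z_i,\rho_i^*(z_i),w_i]$, and each factor is bounded below by $\min_{w_i}\mathbb{E}[\,\cdot\,]=V_i(z_i)$ because the realized $w_i$ is merely one admissible adversary move. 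As every factor is nonnegative, the product inequality is preserved; at the horizon both sides coincide with the accepting indicator, and evaluating at the initial joint state gives $\mathbb{P}(\Sigma^x_{\rho^*}\models\varphi)\ge\prod_i V_i=\prod_i q_i$.

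Finally I would combine the pieces and produce the stated correction. From the two steps, $\mathbb{P}(\Sigma^x_{\rho^*}\models\varphi)\ge\prod_i q_i\ge\prod_i(\hat q_i-\varepsilon)$ (clamping each factor at zero, which only increases the right-hand side), so it remains to bound the multiplication gap $\prod_i\hat q_i-\prod_i(\hat q_i-\varepsilon)$. Using $q_i,\hat q_i\in[0,1]$ and the monotonicity of this gap in each $\hat q_i$—its partial derivatives are nonnegative since $\prod_{j\ne i}\hat q_j\ge\prod_{j\ne i}(\hat q_j-\varepsilon)$—the gap is maximized at $\hat q_i=1$, where it equals $1-(1-\varepsilon)^N$; a final, slightly loose symmetric estimate then bounds it by $\tfrac12[(1+\varepsilon)^N-(1-\varepsilon)^N]=\sum_{k\text{ odd}}\binom{N}{k}\varepsilon^k$, which is exactly the correction in \eqref{eq:metric_com}. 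Since $1-(1-\varepsilon)^N\le\tfrac12[(1+\varepsilon)^N-(1-\varepsilon)^N]$, we conclude $\mathbb{P}(\Sigma^x_{\rho^*}\models\varphi)\ge\prod_i\hat q_i-\tfrac12[(1+\varepsilon)^N-(1-\varepsilon)^N]$, completing the proof.
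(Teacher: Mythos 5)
Your proposal is correct and follows essentially the same route as the paper's proof: the product lower bound via the factorization of the Bellman backup under independent noise with the realized interconnection input dominated by the worst-case internal input, the transfer of the per-subsystem $\varepsilon_i$-closeness to the min-values, and a binomial-type bound on the resulting product gap. The only (cosmetic) difference is in the last algebraic step, where you maximize the gap at $\hat q_i=1$ to get $1-(1-\varepsilon)^N$ and then loosen it, whereas the paper expands $\prod_i(\hat q_i-\varepsilon_i)$ directly into the odd binomial terms $\sum_{k\ \mathrm{odd}}\binom{N}{k}\varepsilon^k=\tfrac12[(1+\varepsilon)^N-(1-\varepsilon)^N]$; both yield the stated correction.
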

\begin{proof}
	We have
	\begin{align*}
	\mathbb P(\Sigma^x_{\rho^*}\models\varphi) \ge
	\prod_{i=1}^N \min_{\xi_i \in \Pi^{\Sigma_i}}\mathbb P((\Sigma_i)^{x_i}_{(\rho_i^*, \xi_i)}\models\varphi_i).
	\end{align*}
	This inequality holds due to the dynamic programming (DP) formulation of $\mathbb P(\Sigma^x_{\rho^*}\models\varphi)$ \cite{SAM17} and the following property of the Bellman operator used in each iteration of DP:
	\begin{align*}
	& \int_{X_1} \int_{X_2} V_1(\bar x_1)V_2(\bar x_2) T_1(d\,\bar x_1\,|\, x_1,x_2) T_2(d\,\bar x_2\,|\, x_1,x_2)\\
	&  =  \int_{X_1} V_1(\bar x_1) T_1(d\,\bar x_1\,|\, x_1,x_2) \int_{X_2} V_2(\bar x_2)  T_2(d\,\bar x_2\,|\, x_1,x_2)\\
	& \ge \min_{w_1}\!\int_{X_1}\!\!\! V_1(\bar x_1) T_1(d\,\bar x_1|x_1,w_1)\!\min_{w_2}\!\int_{X_2}\!\!\! V_2(\bar x_2)  T_2(d\,\bar x_2| w_2,x_2),
	\end{align*}
	where $T_i$ is the stochastic kernel of $\Sigma_i$ and $V_i$ is the value function used in each iteration of DP, for $i\in\{1,2\}$. This inequality allows us to consider the effect of other subsystems in the worst case and get a lower bound on the solution.  
	
	Since for $i=1, 2, \ldots, N$,
	\begin{align*}
	& \left|\min_{\xi_i \in \Pi^{\Sigma_i}}\mathbb P((\Sigma_i)^{x_i}_{(\rho_i^*, \xi_i)}\models\varphi_i)-\min_{\xi_i \in \Pi^{\Sigma_i}}\mathbb P((\widehat \Sigma_i)^{x_i}_{(\rho_i^*, \xi_i)}\models\varphi_i)\right|\le \varepsilon_i.
	\end{align*}
	one can write 
	\begin{align*}
	\mathbb P&(\Sigma^x_{\rho^*}\models\varphi) 
	\ge \prod_{i = 1}^N( \min_{\xi_i \in \Pi^{\Sigma_i}}\mathbb P((\widehat \Sigma_i)^{x_i}_{(\rho_i^*, \xi_i)}\models\varphi_i)\!-\!\varepsilon_i)\\
	& \ge\! \prod_{i = 1}^N \min_{\xi_i \in \Pi^{\Sigma_i}}\mathbb P((\widehat \Sigma_i)^{x_i}_{(\rho_i^*, \xi_i)}\models\varphi_i)
	-\!  \left[{N\choose 1}\epsilon\!+\!{N\choose 3}\epsilon^3\!+\!\ldots\right]\\
	&= \prod_{i = 1}^N \min_{\xi_i \in \Pi^{\Sigma_i}}\mathbb P((\widehat \Sigma_i)^{x_i}_{(\rho_i^*, \xi_i)}\models\varphi_i)
	-\!\frac{1}{2}[(1+\epsilon)^N\!-\!(1-\epsilon)^N],
	\end{align*}
	where $\epsilon=\max_{i}\epsilon_i$ and it completes the proof.
\end{proof}

\begin{remark}
	Note that there is a trade-off between scalability and conservatism in our compositional framework. Our compositional technique significantly mitigates the curse of dimensionality problem due to the state set discretization in RL. On the downside, the overall probabilistic guarantee for interconnected systems in~\eqref{eq:metric_com} is computed based on the multiplication of probabilistic guarantees for individual subsystems, which makes the results conservative.
\end{remark}

\section{Compositional Reinforcement Learning}\label{sec:rl1}

Given Theorem~\ref{thm:key-thm1}, subsystems $\Sigma_i$, properties
$\varphi_i$, and time horizon $\mathcal T$,
Theorem~\ref{thm:key-thm1} provides the error bound for discretization parameters $\delta_i$ and $\mu_i$ if Lipschitz constants $\mathscr{H}_{x_i}$ and $\mathscr{H}_{w_i}$ are known.
This observation enables us to use minimax-Q learning on each underlying discrete game motivated by Theorem~\ref{thm:com} without explicitly constructing the abstraction by restricting observations of the reinforcement learner to the closest representative point in the discrete set of states. In general, the optimal policy for each game depends on the state of the neighboring systems. To remove this dependence, we construct an underapproximation of the game by allowing the neighboring systems to produce arbitrary state sequences---a worse-case assumption about the dynamics. In this way, the maximizing player no longer requires knowledge of the neighboring states by assuming that the neighboring states will always be the worse case. We denote this new underapproximated game by $\widehat\Sigma_{\delta_i}$. Note that the results we show remain true if one uses the original game instead---or any underapproximation to the original game---and we perform this transformation to readily compute a \emph{fully}-decentralized strategy.

In this section we discuss a construction of reward machine from the specifications such that any convergent RL algorithm for two-player stochastic games where players are optimizing such reward signals converges to the value of $\widehat\Sigma_{\delta_i}$. 
These values can then be used to construct $2\varepsilon_i$-optimal strategy for the concrete dt-SCS $\Sigma_i$. Combining these policies may not result in an optimal distributed policy. Instead, we use Theorem~\ref{thm:com} to provide guarantees about its performance via a lower bound.
The proposed solution is summarized below.

\begin{theorem}\label{convergence}
	Let $\varphi_i$ be given finite-horizon properties, $\varepsilon_i>0$, and let 
	$\Sigma_i=(X_i,U_i,W_i,\varsigma_i,f_i,Y_i, h_i)$ be continuous-space
	subsystems, where $f_i,h_i$ and distribution of $\varsigma_i$ are
	unknown, but Lipschitz constants $\mathscr{H}_{x_i}$ and
	$\mathscr{H}_{w_i}$ are known for $i\in\{1,\dots,N\}$. For discretization parameters $\delta_i$ and $\mu_i$ satisfying 
	\[
	\mathcal T_i \mathscr{L}_i(\delta_i \mathscr{H}_{x_i} + \mu_i\mathscr{H}_{w_i})\leq\varepsilon_i,
	\]
	a convergent model-free reinforcement learning algorithm (\emph{e.g.,} minimax-Q learning~\cite{Littma96}) for two-player stochastic games over $\widehat\Sigma_{\delta_i}$
	converges to a
	$2\varepsilon_i$-optimal strategy for subsystems $\Sigma_i$. Accordingly, one can compute a lower bound for the satisfaction probability for the interconnected system based on~\eqref{eq:metric_com}.
\end{theorem}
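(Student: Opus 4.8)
The plan is to assemble the statement from three ingredients already in place: the abstraction error of Theorem~\ref{thm:key-thm1}, the convergence of minimax-Q on finite stochastic games~\cite{Littma96}, and the compositional lower bound of Theorem~\ref{thm:com}. First I would fix, for each $i$, discretization parameters $\delta_i,\mu_i$ meeting the hypothesis $\mathcal T_i\mathscr L_i(\delta_i\mathscr H_{x_i}+\mu_i\mathscr H_{w_i})\le\varepsilon_i$, so that Theorem~\ref{thm:key-thm1} yields $|\mathbb P((\Sigma_i)^{x_i}_{(\rho,\xi)}\models\varphi_i)-\mathbb P((\widehat\Sigma_i)^{\hat x_i}_{(\rho,\xi)}\models\varphi_i)|\le\varepsilon_i$ uniformly over all policy pairs $(\rho,\xi)\in\widehat\Pi_\mMAX\times\widehat\Pi_\mMIN$.

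Next I would render the finite-horizon objective learnable. Since $\varphi_i$ is scLTL, it compiles to a co-safety DFA $\mathcal A_{\varphi_i}$ whose accepting states are absorbing. Taking the product of this DFA with the discretized arena $\widehat\Sigma_{\delta_i}$ and equipping it with a reward machine that pays $1$ exactly upon first entering an accepting DFA state and $0$ otherwise produces a finite SGA in which the game value over horizon $\mathcal T_i$ equals $\min_{\xi}\mathbb P((\widehat\Sigma_i)^{\hat x_i}_{(\rho,\xi)}\models\varphi_i)$. I would verify that this reward is bounded and that the arena is finite in both states and actions, so that the hypotheses of the minimax-Q convergence theorem hold; applying it then gives convergence to the optimal value $V^*$ of $\widehat\Sigma_{\delta_i}$ together with an optimal maximizing strategy $\rho_i^*$.

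The $2\varepsilon_i$ guarantee follows by chaining the closeness inequality twice. Writing $V_\Sigma(\rho):=\min_\xi\mathbb P((\Sigma_i)^{x_i}_{(\rho,\xi)}\models\varphi_i)$ and $V_{\widehat\Sigma}(\rho):=\min_\xi\mathbb P((\widehat\Sigma_i)^{\hat x_i}_{(\rho,\xi)}\models\varphi_i)$, the uniform bound from Step one gives $|V_\Sigma(\rho)-V_{\widehat\Sigma}(\rho)|\le\varepsilon_i$ for every $\rho\in\widehat\Pi_\mMAX$. With $\rho_i^*$ optimal for $\widehat\Sigma_{\delta_i}$ and $\rho^\dagger$ optimal for $\Sigma_i$ over $\widehat\Pi_\mMAX$, one obtains
\[
V_\Sigma(\rho_i^*)\ge V_{\widehat\Sigma}(\rho_i^*)-\varepsilon_i = V^*-\varepsilon_i \ge V_{\widehat\Sigma}(\rho^\dagger)-\varepsilon_i \ge V_\Sigma(\rho^\dagger)-2\varepsilon_i,
\]
so $\rho_i^*$ is $2\varepsilon_i$-optimal for $\Sigma_i$. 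Since $\widehat\Sigma_{\delta_i}$ lets the neighbours behave as a worst-case adversary, its value lower-bounds what $\rho_i^*$ actually attains inside the network, so the transfer is sound; feeding the per-subsystem quantities $\min_{\xi_i}\mathbb P((\widehat\Sigma_i)^{x_i}_{(\rho_i^*,\xi_i)}\models\varphi_i)$ into Theorem~\ref{thm:com} delivers the network-level bound~\eqref{eq:metric_com}.

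The main obstacle I anticipate is Step two: turning ``game value of the reward machine'' into ``satisfaction probability'' rigorously while respecting the underapproximation $\widehat\Sigma_{\delta_i}$. One must check that replacing the true internal-input dynamics with an arbitrary worst-case neighbour sequence keeps the arena finite and the minimizing player's optimum well defined, and that the resulting value is a genuine lower bound rather than merely an approximation---this is where the ``any underapproximation'' remark must be converted into a proof, and where the interplay between the DFA product, the horizon $\mathcal T_i$, and the convergence conditions required by minimax-Q has to be reconciled.
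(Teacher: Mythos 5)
Your assembly of Theorem~\ref{thm:key-thm1} (instantiated with the hypothesis on $\delta_i,\mu_i$), the minimax-Q convergence result applied to the reward-machine product of Proposition~\ref{prop:product-mdp}, the two-fold application of the closeness bound to obtain $2\varepsilon_i$-optimality, and finally Theorem~\ref{thm:com} for the network-level bound is exactly the argument the paper intends --- the theorem is stated there without a separate proof, precisely as a summary of those surrounding results. Your proposal is correct and takes essentially the same approach, including the honest observation that the optimality is measured against the best policy in $\widehat\Pi_{\mMAX}$, the only class for which Theorem~\ref{thm:key-thm1} is stated.
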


In Section~\ref{sec:product}, we discuss how to construct rewards from the finite horizon specifications and discuss an approach to make them more dense. 
Section~\ref{subsec:dis} presents our approach to accelerate RL by selecting multi-level discretization of the state set.
However, before we present these results, we recall convergence results for two-player reinforcement learning on stochastic game arenas.

\subsection{RL for Stochastic Games: The Minimax-Q Algorithm}
\label{subsec:RL}
A (discounted) stochastic game is characterized by a pair ($\mathcal{G}, \Rwd)$ consisting of a finite SGA $\Gg$ and reward function $\Rwd: S {\times} A {\times} S \to \Real$.
From initial state $s_0$, the game evolves by having the player that controls $s_k$ at time step $k$ select an action $a_{k+1} \in A(s_k)$. The state then evolves under distribution $T(\cdot \mid s_k, a_k)$ resulting in a next state $s_{k+1}$ and reward $r_{k+1} = \Rwd(s_k, a_k, s_{k+1})$.
Given a discount factor $\gamma \in [0,1)$, the payoff (from player Min to player Max) is the $\gamma$-discounted sum of rewards, \emph{i.e.,}
$\sum_{k=0}^\infty r_{k+1}\gamma^k$.
The objective of player Max is to maximize the expected payoff, while the objective of player Min is the opposite.

A policy $\rho_* \in \Pi_{\mMAX}$ is optimal if it maximizes
\begin{equation*}
\inf_{\xi \in \Pi_{\mMIN}} \EE^s_{\rho, \xi} \Big[\sum_{k=0}^\infty \Rwd(S_k, A_{k+1}, S_{k+1}) \gamma^k\Big],
\end{equation*}
which is sum of rewards under the worst policy of player Min.
The optimal policies for player Min are defined analogously.
The goal of RL is to compute optimal policies for both players with samples from the game, without \emph{a priori}
knowledge of the transition probability and rewards.
The RL solves this by learning a state-action value function, called $Q$-values, defined as 
\[Q_{\rho,\xi}(s,a) = \EE^{s,a}_{\rho,\xi} \Big[ \sum_{k=0}^\infty
\Rwd(S_k, A_{k+1}, S_{k+1}) \gamma^k \Big]\!,
\]
where $\rho \in \Pi_{\mMAX}$ and $\xi \in \Pi_\mMIN$.
Let  
$Q_*(s,a) = \sup_{\rho \in \Pi_\mMAX} \inf_{\xi \in \Pi_\mMIN}  Q_{\rho,\xi}(s, a)$ be the optimal value.
Given $Q_*(s,a)$, one can extract the policy for both players by
selecting the maximum value action in states controlled by player Max and the
minimum value action in states controlled by player Min. 
The following Bellman optimality equations characterize the optimal solutions and forms the basis for computing the Q-values by dynamic programming:
\[
Q_*(s, a) {=} \sum_{s' \in S} T(s' | s, a) {\cdot} \Big( \Rwd(s', a, s) {+} \gamma \cdot \opt_{a'\in A(s')} Q_*(s',a')\Big),
\]
where $\opt$ is $\max$ if $s' \in S_\mMAX$ and $\min$ if $s' \in S_\mMIN$.
Minimax-Q
learning~\cite{Littma94} estimates the dynamic programming update from the
stream of samples by performing the following update at each time step: 
\begin{align*} 
Q(s(k),a(k)) &:=  (1-\alpha_k) Q(s(k), a(k))\! +\!\alpha_k(r(k+1) \!+ \!\gamma \!\!\opt_{a' \in A(s(k+1))}\!\! Q(s(k+1),a')),
\end{align*}
where $\alpha_k \in (0,1)$, a hyperparameter, is the learning rate at time step $k$.
The Minimax-Q algorithm produces the
controller directly, without producing estimates of the unknown
system dynamics: it is \emph{model-free}. 
Moreover, it reduces to classical Q-learning~\cite{watkins1989learning} for MDPs, \emph{i.e.,} when $S_{\mMIN} = \emptyset$.

\begin{theorem}[Minimax-Q Learning\cite{Littma96}]
	The minimax-Q learning algorithm converges to the unique fixpoint $Q_*(s,a)$ if $r(k)$ is bounded, the learning rate satisfies the Robbins-Monro conditions, i.e., 
	$\sum_{k=0}^\infty \alpha_k = \infty \text{  and  } \sum_{k=0}^\infty \alpha_k^2 < \infty$,
	and all state-action pairs are seen infinitely often.
\end{theorem}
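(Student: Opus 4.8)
The plan is to recognize the minimax-Q update as an asynchronous stochastic approximation scheme whose expected increment is driven by the Bellman optimality operator, and then to invoke the standard convergence theory for contractive stochastic approximation. First I would fix notation by defining, on the finite-dimensional space of $Q$-functions $Q : S \times A \to \Real$, the operator
\begin{equation*}
(\mathcal{H}Q)(s,a) = \sum_{s' \in S} T(s' \mid s, a)\Bigl(\Rwd(s, a, s') + \gamma\,\opt_{a' \in A(s')} Q(s', a')\Bigr),
\end{equation*}
where $\opt$ is $\max$ on $S_\mMAX$ and $\min$ on $S_\mMIN$. The Bellman optimality equation in the excerpt states precisely that the target $Q_*$ is a fixed point of $\mathcal{H}$, so the theorem reduces to showing that (i) $\mathcal{H}$ has a unique fixed point and (ii) the stochastic iterates converge to it.

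Second, I would establish that $\mathcal{H}$ is a $\gamma$-contraction in the sup-norm $\norm{\cdot}$. The only nonlinearity is the $\opt$ operator, and both $\max$ and $\min$ are non-expansive: for any $f, g$ one has $|\opt_{a'} f(a') - \opt_{a'} g(a')| \le \max_{a'} |f(a') - g(a')|$ at each state, uniformly over the fixed Max/Min partition. Since $T(\cdot \mid s, a)$ is a probability distribution, averaging preserves this bound, and the factor $\gamma < 1$ yields $\norm{\mathcal{H}Q - \mathcal{H}Q'} \le \gamma \norm{Q - Q'}$. By the Banach fixed-point theorem $\mathcal{H}$ admits a unique fixed point, which by the Bellman equation is exactly $Q_*$; this already settles uniqueness of the limit.

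Third, I would cast the update in stochastic-approximation form. Writing the sampled target as $F_k(s,a) = r(k+1) + \gamma\,\opt_{a'} Q_k(s(k+1), a')$ with the observed successor $s(k+1) \sim T(\cdot \mid s, a)$, the conditional expectation over the transition and reward recovers $\EE[F_k(s,a) \mid \mathcal{F}_k] = (\mathcal{H}Q_k)(s,a)$. Hence the update on the visited pair reads $Q_{k+1}(s,a) = (1-\alpha_k)Q_k(s,a) + \alpha_k\bigl((\mathcal{H}Q_k)(s,a) + w_k(s,a)\bigr)$ with martingale-difference noise $w_k = F_k - \mathcal{H}Q_k$, while $Q_{k+1} = Q_k$ on unvisited pairs. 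Boundedness of $r(k)$ gives $\EE[w_k \mid \mathcal{F}_k] = 0$ and a conditional second moment bounded by $C(1 + \norm{Q_k}^2)$. The hypotheses then match the asynchronous contraction-mapping convergence theorem (in the style of Jaakkola--Jordan--Singh or Szepesv\'ari--Littman): the per-pair Robbins--Monro conditions follow from the stated $\sum_k \alpha_k = \infty$ and $\sum_k \alpha_k^2 < \infty$ together with each state-action pair being visited infinitely often, and the contraction plus noise bounds deliver $Q_k \to Q_*$ almost surely.

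The main obstacle I anticipate is the bookkeeping required to apply the stochastic-approximation theorem rigorously, specifically establishing a priori that the iterates $Q_k$ remain bounded---so that the variance bound $C(1 + \norm{Q_k}^2)$ and the theorem's hypotheses are genuinely in force---and confirming that the asynchronous per-pair learning rates inherit the Robbins--Monro property from the infinitely-often visitation assumption. The contraction property itself is the conceptual heart but is essentially routine once the non-expansiveness of $\max$ and $\min$ is in hand; the delicate part is controlling the almost-sure asymptotics of the noise under asynchronous updating.
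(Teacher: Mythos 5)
The paper gives no proof of this theorem---it is imported verbatim as a cited result from \cite{Littma96}---so the only meaningful comparison is against that reference's argument, which your proposal reconstructs faithfully: the non-expansiveness of the $\opt$ operator over the fixed Max/Min partition yielding a $\gamma$-contraction of the Bellman operator, followed by the asynchronous stochastic-approximation convergence theorem of Jaakkola--Jordan--Singh/Szepesv\'ari--Littman with martingale-difference noise and per-pair Robbins--Monro stepsizes. Your sketch is correct (the boundedness of iterates you flag is indeed the one technical lemma the standard references supply, using $\gamma<1$ and bounded rewards), so there is nothing to fix.
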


For objectives with bounded horizon (i.e., when $r(k) {=} 0$ for all $k {>} N$ for some fixed $N {>} 0$), this convergence result holds even for undiscounted case $\gamma {=} 1$.

\subsection{Reward Machines}\label{sec:product}
In this subsection, we remove indices $i$ for the sake of simple presentation.
Let us fix a subsystem $\Sigma$, a discretization parameter $\delta$, the abstract SGA $\Gg = \widehat\Sigma_{\delta} = (S, A, T, S_{\mMAX}, S_{\mMIN})$ and its specification ${\mathcal A} = \mathcal A_{\varphi}$.
Recall that the DFA $\mathcal{A}_\varphi$ corresponding to a finite-horizon specification $\varphi$ has the property that there is a unique accepting state and
all out-going transitions from that state are self-loops.
Our goal is to introduce a reward function for $\Gg$ such that strategies maximizing it maximize the probability of satisfaction of ${\mathcal A}$.

While it is convenient to express simple specifications directly as Markovian reward signals $R: S\times A \times S \to \Real$, it is difficult to express complex requirements as  Markovian rewards.
A recent trend~\cite{icarte2018using} is to use finite-state machines that read the observation sequences of the MDP to produce reward: these machines are called reward machines.
A {\it reward machine} (RM) is a tuple
$\mathcal A_R = (Q, \alphabet, \mathsf{t}, q_0, \Rwd)$,
where $Q$ is a finite set of states, $\alphabet$ is a finite alphabet, 
$\mathsf{t}: Q \times \alphabet \to Q$ is a transition function,
$q_0 \in Q$ is the initial state,
and  $\Rwd: Q \times \alphabet \times Q \to \Real$ is a reward function. 
The optimal value for a stochastic game $\Gg = (S, A, T, S_{\mMAX}, S_{\mMIN})$ with rewards expressed as an RM   $\mathcal A_R = (Q, \alphabet, \mathsf{t}, q_0, \Rwd)$ can be achieved by computing optimal values for \emph{product stochastic game} $(\Gg\times {\mathcal A}_R) = ((S^\times, A^\times, T^\times, S^\times_{\mMAX}, S^\times_{\mMIN}), \Rwd^\times)$ where  
\begin{itemize}
	\item $S^\times = S \times Q$, $A^\times = A$, 
	$S^\times_\mMIN = S_\mMIN \times Q$, $S^\times_\mMAX = A_\mMAX \times Q$,
	\item 
	$T^\times: S^\times\times A^\times \times S^\times \to [0, 1]$ is such that for $(x, q), (x', q') \in S^\times$ and $v \in A$, 
	\[
	T^\times((x, q), v, (x', q')) = \begin{cases}
	T(x, v, x') & \text{if $q' = \mathsf{t}(q, \Lab(x))$}\\
	0, & \text{otherwise,}
	\end{cases}
	\]
	\item 
	$\Rwd: S^\times \times A^\times \times S^\times \to \Real$ is defined as 
	$
	\Rwd^\times((x, q), v, (x', q')) = \Rwd(q, \Lab(x), q').
	$
\end{itemize}

The following proposition states the correctness of the reduction to the aforementioned reward machine. The correctness is straightforward~\cite{Courco95} since ${\mathcal A}$ is a deterministic finite automaton with a sink accepting state. 
\begin{proposition}
	\label{prop:product-mdp}
	Given a DFA $\mathcal A  = (Q, \mathsf{\Sigma}_{\textsf{a}}, \mathsf{t}, q_0, F_{\textsf{a}})$ capturing the finite-horizon specification $\varphi$,  we construct the RM 
	$\mathcal A_\Rwd  = (Q, \mathsf{\Sigma}_{\textsf{a}}, \mathsf{t},
	q_0, \Rwd)$ with the same structure as ${\mathcal A}$ with a reward governed by the accepting states, \emph{i.e.,} $\Rwd(q, a, q') = 1$ if $q' \in F_{\textsf{a}}$ for all $q, q' \in Q$ and $a \in \mathsf{\Sigma}_{\textsf{a}}$.
	Since $\mathcal{A}$ is a deterministic automaton,
	an expected reward-optimal policy (for discount factor $\gamma = 1$) in  $(\Gg \times {\mathcal A}_\Rwd)$ for a given player characterizes an optimal policy in $\Gg$ to satisfy $\varphi$.
\end{proposition}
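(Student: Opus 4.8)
The plan is to show that, for every fixed pair of Markov policies, the expected undiscounted return collected in the product $\Gg\times\mathcal A_\Rwd$ coincides with the probability that the induced play of $\Gg$ satisfies $\varphi$; the equivalence of reward-optimal and satisfaction-optimal policies (for each player) is then immediate, since both players are optimizing the same functional state by state. First I would make the projection correspondence precise: by induction on $k$ and the definition of $T^\times$, the $Q$-component of the product state reached after $k$ steps equals $\mathsf{\hat t}(q_0,\Lab(x_0)\cdots\Lab(x_{k-1}))$, the unique run of the deterministic automaton $\mathcal A$ on the observed label sequence. Consequently a play of the product projects to a play of $\Gg$ together with its (deterministic) DFA run, the kernel $T^\times$ induces exactly the law of $\Gg$ on the $S$-component under any given policy, and no information is lost by taking the product.

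Next I would identify the satisfaction event with a reachability event in the product. Under the finite-horizon reading fixed in the Remark, a play satisfies $(\varphi,\mathcal T)$ iff its label sequence has a prefix of length at most $\mathcal T{+}1$ in $\mathcal L(\mathcal A)$, i.e. iff the DFA run enters $F_{\textsf{a}}$ within $\mathcal T$ steps. Because $\mathcal A$ is a co-safety automaton whose unique accepting state is a sink (all out-going transitions are self-loops), the predicate ``the $Q$-component lies in $F_{\textsf{a}}$ at step $k$'' is monotone in $k$; hence satisfaction is equivalent to the $Q$-component being in $F_{\textsf{a}}$ at the horizon, and the first hitting time of $F_{\textsf{a}}$ is a stopping time of the play.

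The crux is to match the accumulated reward to the indicator of satisfaction. Since $\Rwd^\times$ pays a unit exactly on transitions entering $F_{\textsf{a}}$ and the accepting state is absorbing, the continuation of any play after absorption is policy-independent; treating the accepting sink as terminal, the undiscounted return ($\gamma=1$) collected over the horizon equals the indicator that $F_{\textsf{a}}$ is reached. Taking expectations, for any policy pair $(\rho,\xi)$ the expected return from $(x,q_0)$ equals $\mathbb P(\Gg^{x}_{\rho,\xi}\models\varphi)$. Therefore the Bellman backward-induction recursion for the expected return and the one for the satisfaction probability coincide term by term, so an action is return-optimal at a product state iff it is satisfaction-optimal there; taking $\max$ on the states of player Max and $\min$ on those of player Min then yields the claim, recovering the standard correctness of DFA-derived reward machines in the spirit of \cite{Courco95}.

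The step I expect to require the most care is precisely this last matching. With a naive ``one unit per step spent in $F_{\textsf{a}}$'' accounting under $\gamma=1$, the return would over-weight early arrivals and its optimum need not coincide with the satisfaction optimum, so the argument must exploit the absorbing accepting sink to collapse the post-absorption contribution into a single policy-independent term (equivalently, terminate the episode upon reaching $F_{\textsf{a}}$). A secondary, routine point is measurability of the extracted policies over the uncountable product state space: since $Q$ is finite and $\Gg\times\mathcal A_\Rwd$ is again a Borel stochastic game arena, the existence of universally measurable optimal selectors and the validity of the dynamic-programming characterization follow from the same theory already invoked for $\Sigma$.
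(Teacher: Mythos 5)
The paper does not actually prove this proposition: it dismisses it in one sentence as ``straightforward'' by citing \cite{Courco95} together with the observations that $\mathcal A$ is deterministic and its accepting state is a sink. Your write-up supplies exactly the argument that sentence gestures at --- the deterministic-run correspondence under $T^\times$, the reduction of satisfaction to reaching $F_{\textsf{a}}$ within the horizon, and the identification of expected return with satisfaction probability --- so you are on the paper's intended route, just with the details filled in. The one step worth dwelling on is the one you yourself single out, and you are right that it is where the proposition is fragile as literally stated: since $\Rwd(q,a,q')=1$ whenever $q'\in F_{\textsf{a}}$ and the accepting state's out-going transitions are self-loops, the undiscounted return over a horizon $\mathcal T$ is not the indicator of acceptance but $\mathcal T+1-\tau$ on the event $\{\tau\le\mathcal T\}$, where $\tau$ is the first hitting time of $F_{\textsf{a}}$; maximizing $\EE\left[(\mathcal T+1-\tau)\,\mathbf 1\{\tau\le\mathcal T\}\right]$ is not equivalent to maximizing $\mathbb P\{\tau\le\mathcal T\}$ (a policy accepting at step $1$ with probability $1/2$ beats one accepting at the horizon with probability $9/10$ in expected return but not in satisfaction probability). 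So the claim holds only under the convention you adopt --- terminate the episode on entering $F_{\textsf{a}}$, or equivalently pay the unit only on first entry ($q\notin F_{\textsf{a}}$, $q'\in F_{\textsf{a}}$). That convention is evidently what the authors intend: the shaped reward $\Rwd_\kappa$ of the following subsection telescopes to $P(d(q_n))-P(d(q_0))$ and pays $0$ on the accepting self-loops, i.e., it already implements first-entry accounting. Your proof is correct once that convention is stated as a hypothesis rather than left as an ``equivalently'' aside, since it is a genuine emendation of the reward as defined in the proposition; the remaining points (monotonicity of the acceptance predicate from the sink structure, measurable selection over the Borel product arena) are routine and you handle them adequately.
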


While the RM ${\mathcal A}_\Rwd$ constructed above is correct, the reward signals are quite sparse (can be received when the product SGA visits the unique accepting state).
The sparsity of reward signals is known to result in slow learning~\cite{Sutton18}. 
Inspired from~\cite{lavaei2020ICCPS}, we use a ``shaped'' reward function $\Rwd_\kappa$
(parameterized by a hyper-parameter $\kappa$) such that for suitable values of
$\kappa$, optimal policies for $\Rwd_\kappa$ are the same as optimal policies for
$\Rwd$, but unlike $\Rwd$ the function $\Rwd_\kappa$ is more frequent.

The function $\Rwd_\kappa$ is defined based on the structure of RM ${\mathcal A}_\Rwd$.
Let $d(q)$ be the minimum distance of the state $q$ to the unique accepting
state $q_F$.
Let $d_{\texttt{max}} = 1 + \max_{q\in Q} \{ d(q) \::\: d(q) < \infty \}$.
If there is no path from $q$ to $q_F$, let $d(q)$ be equal to $d_{\texttt{max}}$.
We define the \emph{potential function} $P: \mathbb{N} \to \mathbb{R}$ as the following:
\[
P(d) =
\begin{cases}
\kappa \frac{d - d(q_0)}{1 - d_{\texttt{max}}}, &\quad \text{ for $d > 0$},\\
1, &\quad \text{ for $d = 0$},
\end{cases}
\]
where $\kappa$ is a constant hyper-parameter.
Note that the potential of the initial state $P(d(q_0))$ is $0$ and the potential $P(d(q_F))$ of the accepting state is $1$.
In addition, $P(1) - P(d_{\texttt{max}}) = \kappa$.
The $\kappa$-shaped reward function 
$\Rwd_\kappa: Q \times \alphabet \times Q \to \Real$ is defined as the difference between potentials of the destination and of the target states of transition of the reward machines, \emph{i.e.,} $\Rwd_\kappa((x, q), v, (x', q')) = P(d(q')) - P(d(q))$. 
Moreover, for every run $r = (x_0, q_0), v_1, (x_1, q_1), \ldots,  (x_n, q_n)$ of $\Gg\times{\mathcal A}_\Rwd$, its accumulated reward is simply the potential difference
between the last and the first states, \emph{i.e.,} $P(d(q_n)) - P(d(q_0))$.

\begin{theorem}[Correctness of Reward Shaping]{theorem}{shapingthm}
	\label{reward-shaping}
	For every product stochastic game $\Gg\times{\mathcal A}_\Rwd$  with initial state $(x_0, q_0)$ and reward function $\Rwd$, there exists $\kappa_\star > 0$ such that for all $\kappa < \kappa_\star$ the set of optimal expected reward policies for both players is the same as the set of optimal expected reward policies for $\Gg\times{\mathcal A}_\Rwd$ with reward function $\Rwd_\kappa$.
\end{theorem}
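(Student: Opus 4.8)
The plan is to turn the assertion into a perturbation statement about a finite game, using the telescoping already recorded for $\Rwd_\kappa$. First I would write the value of a policy pair under $\Rwd_\kappa$ in closed form. Since the accumulated shaped reward along any run $(x_0,q_0),v_1,\ldots,(x_n,q_n)$ equals $P(d(q_n))-P(d(q_0))=P(d(q_n))$, and since $q_F$ is an absorbing sink, the terminal DFA component $q_n$ is either $q_F$, contributing $P(0)=1$, or a non-accepting state contributing $P(d(q_n))=\kappa\,g(q_n)$ with $g(q):=\frac{d(q)-d(q_0)}{1-d_{\texttt{max}}}$. As $Q$ is finite, $|g|\le G$ for some constant $G<\infty$. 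Taking expectations over $\Gg^{(x_0,q_0)}_{\rho,\xi}$ yields
\[
V_\kappa(\rho,\xi)=\PP\big(\Gg^{(x_0,q_0)}_{\rho,\xi}\ \text{reaches}\ q_F\big)+\kappa\,h(\rho,\xi),\qquad h(\rho,\xi):=\EE\big[g(Q_n)\,\mathbf 1(Q_n\neq q_F)\big],
\]
where the first summand is the probability of satisfying $\varphi$, i.e. the value $V_0(\rho,\xi)$ of the unshaped machine of Proposition~\ref{prop:product-mdp}, and $|h(\rho,\xi)|\le G$ uniformly in $(\rho,\xi)$. Hence $V_\kappa=V_0+\kappa h$ with $|V_\kappa-V_0|\le\kappa G$ at every policy pair.

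The second step is a finite gap argument carried through the game's $\min$--$\max$. Because $\Gg$ is finite and the horizon is bounded, each player has only finitely many Markov policies relevant to the payoff (finitely many decision rules at each of the finitely many time steps), so the guaranteed values $\underline V(\rho):=\min_{\xi\in\Pi^\Gg_\mMIN}V(\rho,\xi)$ and $\overline V(\xi):=\max_{\rho\in\Pi^\Gg_\mMAX}V(\rho,\xi)$ take finitely many values and the game has a value. The uniform pointwise bound gives $|\underline V_\kappa(\rho)-\underline V_0(\rho)|\le\kappa G$ and $|\overline V_\kappa(\xi)-\overline V_0(\xi)|\le\kappa G$ for all $\rho,\xi$. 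Let $\eta>0$ be the smallest nonzero separation between the optimal base value $\max_\rho\underline V_0(\rho)$ and $\underline V_0(\rho)$ over the strictly base-suboptimal Max policies, together with the symmetric quantity for Min; $\eta$ exists by finiteness. Setting $\kappa_\star:=\eta/(2G)$, for every $\kappa<\kappa_\star$ each base-suboptimal policy stays strictly below every base-optimal one after a perturbation of size at most $\kappa G<\eta/2$, so every policy optimal under $\Rwd_\kappa$ is optimal under $\Rwd$; the identical argument on $\overline V$ handles player Min. This gives the inclusion of the $\Rwd_\kappa$-optimal policies in the $\Rwd$-optimal ones, i.e. that solving the denser shaped game returns a genuine $\varphi$-maximizer.

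The hard part, and the reason the statement is confined to small $\kappa$ rather than holding for all $\kappa$ as in the classical discounted potential-shaping theorem, is the boundary term. With $\gamma=1$ over a finite horizon the shaped return differs from the base return by $P(d(Q_n))-P(d(q_0))$, whose expectation $\kappa\,h(\rho,\xi)$ genuinely depends on the policy through the law of the terminal state; thus the shaping is not a constant offset and the Bellman/$Q$-shift invariance of Ng-style arguments breaks at the last step. The design $P(0)=1$, $P(d(q_0))=0$ is exactly what keeps the ``reach $q_F$'' contribution pinned at $1$ while confining the remaining guidance to magnitude $O(\kappa)$, so that it can be dominated by the finite suboptimality gap $\eta$. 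The remaining delicate point is the reverse inclusion needed for full set equality: since $V_\kappa=V_0+\kappa h$, a base-optimal policy is $\Rwd_\kappa$-optimal exactly when it also maximizes $h$ within the base-optimal class, so equality holds verbatim when the base optimum is unique—the generic situation here—and in general reduces to checking that $h$ is constant on that class; the forward inclusion proved above is in any case the property that certifies correctness of learning with $\Rwd_\kappa$.
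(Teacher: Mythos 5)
The paper states Theorem~\ref{reward-shaping} without giving a proof (the appendix contains only the case-study dynamics), so there is no in-paper argument to compare against; judged on its own, your telescoping-plus-gap argument is the natural one and its first two steps are sound. The closed form $V_\kappa(\rho,\xi)=\PP(\text{reach } q_F)+\kappa\,h(\rho,\xi)$ with $|h|\le G$ follows correctly from the telescoping of $\Rwd_\kappa$ and the finiteness of $Q$, and the finite-gap perturbation through $\min$--$\max$ (finitely many deterministic Markov policies over a bounded horizon, smallest nonzero suboptimality gap $\eta$, $\kappa_\star=\eta/(2G)$) correctly yields the inclusion that every $\Rwd_\kappa$-optimal policy is $\Rwd$-optimal. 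One caveat you gloss over: identifying the first summand with the \emph{unshaped} value $V_0$ requires reading $\Rwd(q,a,q')=1$ for $q'\in F_{\textsf{a}}$ as awarded only on first entry; since $F_{\textsf{a}}$ is absorbing, the literal definition pays $1$ on every subsequent self-loop, making $V_0$ an expected count of steps spent accepting rather than a reach probability, and then $V_\kappa-V_0$ is not $O(\kappa)$. Your argument needs the first-entry convention, which is surely the intended one but should be stated.

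The genuine gap is the one you yourself flag and then leave open: the theorem asserts \emph{equality} of the optimal policy sets, and your argument only delivers one inclusion. The reverse inclusion fails in general, not just in your proof. Since the shaped return of a run equals $P(d(q_n))$, two base-optimal policies that reach $q_F$ with the same probability but, conditional on failure, terminate in DFA states at different distances $d$ receive different values of $h$, hence different $V_\kappa$ for every $\kappa>0$; the one ending farther from $q_F$ is then strictly $\Rwd_\kappa$-suboptimal while remaining $\Rwd$-optimal. So no choice of $\kappa_\star$ rescues set equality; it holds only under an extra hypothesis such as uniqueness of the base optimum or constancy of $h$ on the base-optimal set (equivalently, a potential that is constant over all terminal DFA states reachable under base-optimal play). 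You should either prove the theorem with the conclusion weakened to the inclusion you established---which is exactly what is needed to certify that learning against $\Rwd_\kappa$ returns a maximizer of the satisfaction probability---or add such a hypothesis; as written, your proposal proves a correct and useful statement, but not the stated one.
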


Theorem~\ref{reward-shaping} demonstrates one way to shape rewards such that the
optimal policy remains unaffected while making the rewards less sparse.
Along similar lines, one can construct a variety of potential functions and
corresponding shaped rewards with similar correctness properties.

\subsection{Accelerating RL with Multi-Level Discretization}\label{subsec:dis}

The efficiency of tabular RL algorithms depends on the size of the state set of the finite abstraction---with larger state sets typically requiring longer training times. For instance, if two different agents are trained on a coarse discretization and a fine discretization of the same system, then the former will typically have shorter training times at the cost of higher discretization error. However, since the underlying continuous-space system is the same, the two agents will learn roughly similar policies. There is a large body of literature on adaptive state space partitioning \cite{lee2004adaptive, helms2016dynamic}. 
We propose first training an RL agent on a coarse discretization of the system, and then using the resulting policy to initialize training for a different RL agent on a finer discretization of the same system. By repeating this process with increasingly fine discretization levels we reach the final desired discretization level. Our experimental results demonstrate that this multi-level discretization scheme dramatically accelerates the learning process.

The proposed multi-level discretization algorithms begins by creating a coarse discretization of the continuous-space system. 
It then trains a minimax-Q learning agent for a fixed time. 
After this time has elapsed, it decreases the discretization parameter to create a more finely discretized system by, for instance, halving the discretization parameter. It then creates a new learning agent for the more finely discretized problem.
The new Q-values are initialized from the old Q-values where the value of a state-action pair inherits its values from the corresponding nearest state-action pair in the previous discretization.
This new agent is then trained for a fixed time and is used to initialize a new agent on an even finer discretization of the system. This process get repeated until we reach the final desired discretization level. Note that as long as we stop refining the discretization at some point, we retain the convergence guarantees of minimax-Q learning since it converges from arbitrarily initializations.

\section{Experimental Results}
\label{example}
We consider two case studies. The first case study (room temperature network) concerns the temperature regulation in a network of $N= 20$ rooms with a circular topology. The detailed dynamics for this case study is given in the Section~\ref{sec:case}.
We employ Theorem~\ref{convergence} and synthesize a controller for $\Sigma$ via its implicit abstracted games $\widehat\Sigma_{\delta_i}$,
so that the controller maintains the temperature of each room in the comfort zone $[17,18]$ for at least $45$ minutes. 

\begin{wrapfigure}{r}{0.35\textwidth}
	\centering
	\includegraphics[width=4cm]{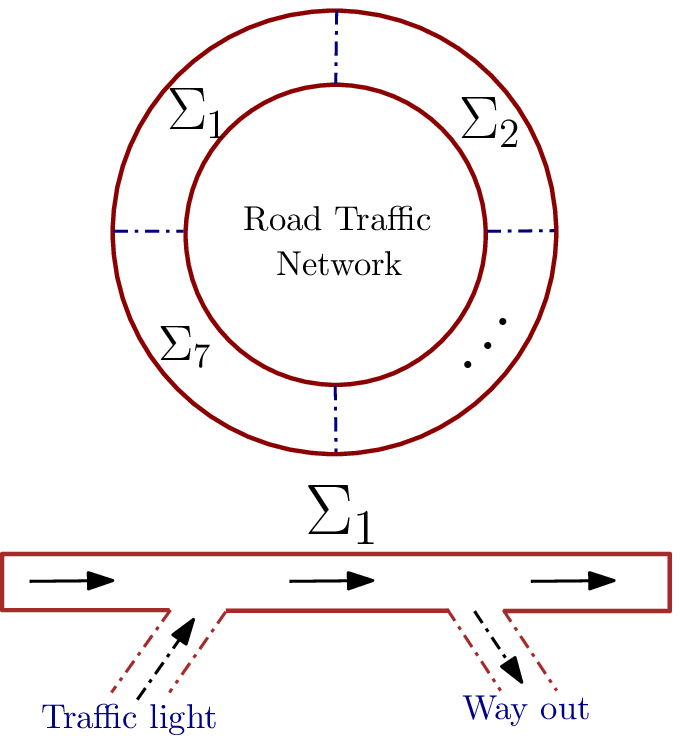}
	\label{Fig2}
\end{wrapfigure}
For the second case study, we consider a road traffic ring network that consists of $N = 7$ identical cells, each of which has $1$ entry and $1$ exit, as shown in the right figure.
The dynamics are given in Section~\ref{sec:case}.
The entry of the cell is controlled by a traffic light, denoted by $v
\in \{0,1\}$, that enables (green light) or not (red light) the vehicles to pass when $v=1$ or $v=0$, respectively. Using Theorem~\ref{convergence}, we synthesize a controller for $\Sigma$ via its implicit abstracted games $\widehat\Sigma_{\delta_i}$ so that the  controller keeps traffic density below $20$ vehicles per cell for at least $36$ seconds. 

\begin{figure}[b!]
	\centering
	\vspace{1em}
	\includegraphics[scale=0.3]{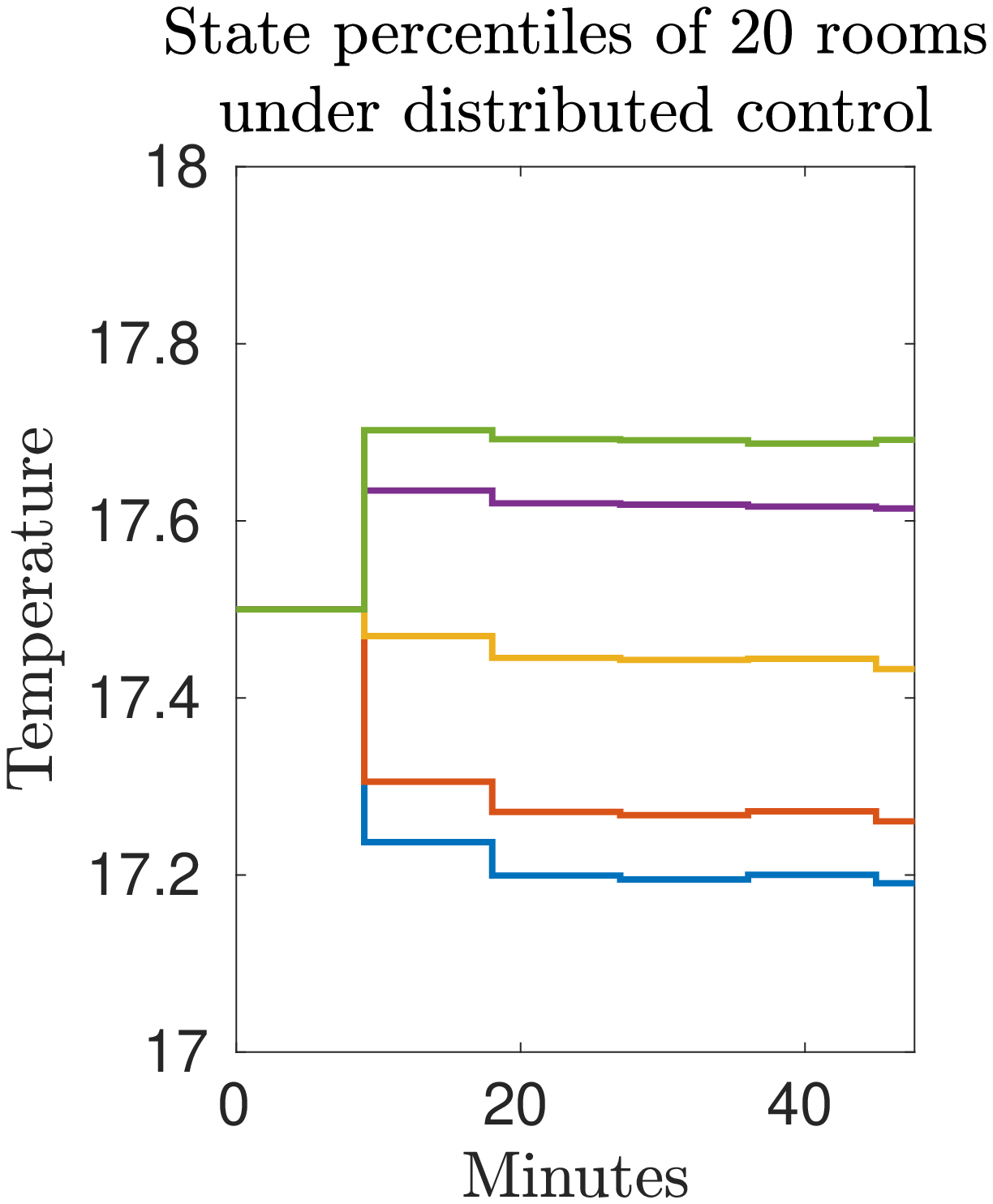}
	\hspace{5em}
	\includegraphics[scale=0.3]{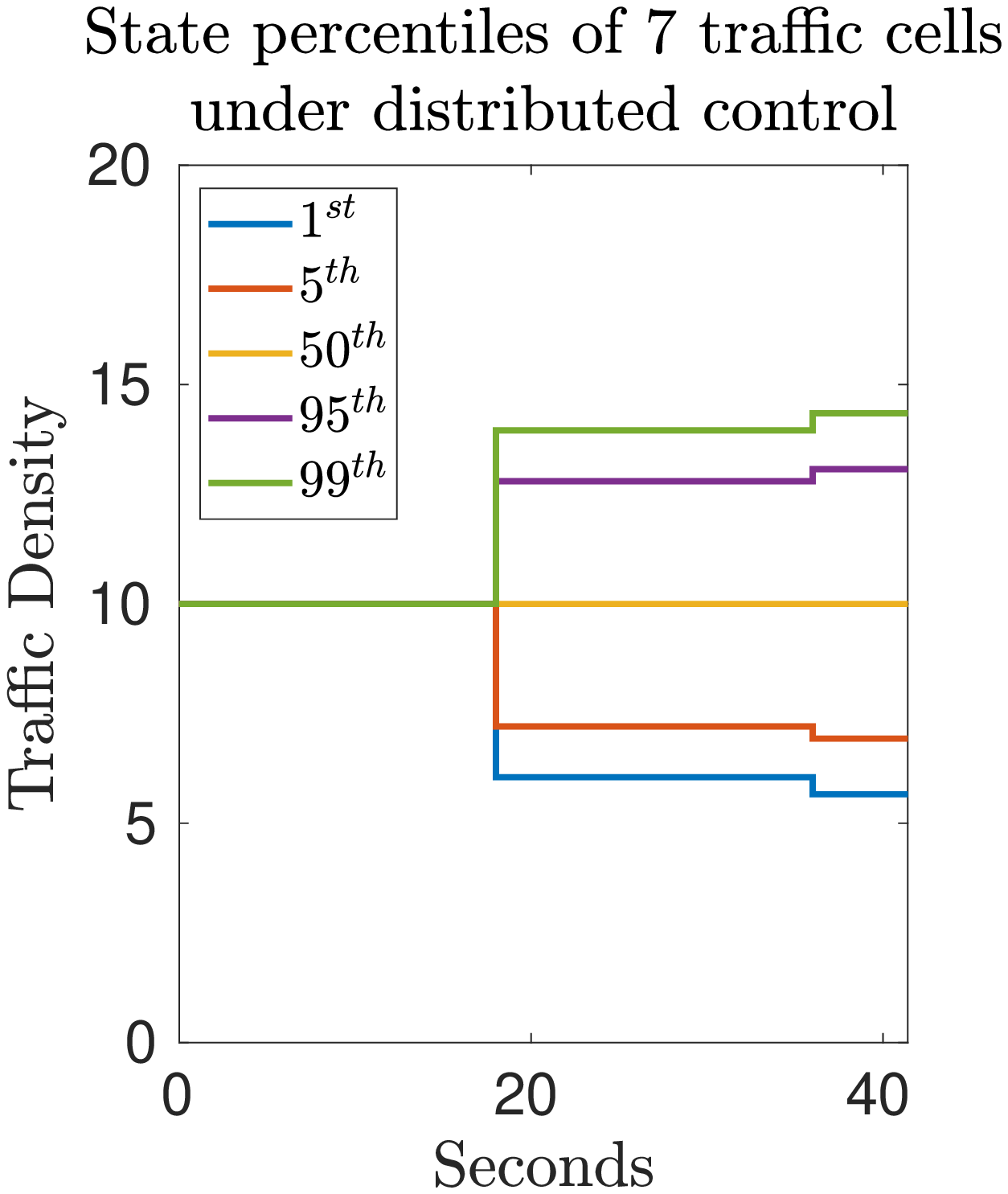}
	\caption{State evolution of the learned distributed controllers visualized through percentiles from $10^6$ sampled trajectories.}
\end{figure}

We synthesize a controller for both case studies by first producing implicitly abstract subsystems.
We then learn a controller for the resulting stochastic games $\widehat\Sigma_{\delta_i}$ with minimax Q-learning. To accelerate learning, we use the multi-level discretization scheme described in Subsection~\ref{subsec:dis}. For the room temperature and road traffic networks, the final discretization values are $\delta_i = 0.001$, $\mu_i = 0.1$, and $\delta_i = 0.05$, $\mu_i = 0.01$, respectively. For the room temperature control, we use 1.5 million episodes, a learning rate of $0.04$ decayed linearly to $0.02$, an exploration rate of $0.1$, and a discount factor of $1$. This takes approximately $5$ minutes of wall-clock time.  For the road traffic network, we employ $2$ million episodes, a learning rate of $0.1$ decayed linearly to $0.02$, an exploration rate of $0.2$, and a discount factor of $1$. This takes approximately $4$ minutes.

Table~\ref{tab:experiments} shows the results for the learned
controllers: $p^+$ is the (approximate) probability of the learned
policy satisfying the finite-horizon objective over the subsystem against an
optimal adversarial internal input, $\varepsilon$ is the bound on the
quantized measurement error from equation~\eqref{eq:metric_lit1},
$p_{\mathrm{low}}$ is the lower bound from
equation~\eqref{eq:metric_com} on the probability of the decentralized
controller satisfying the finite-horizon objective over the interconnected system, and $p_{\mathrm{sampled}}$
is a $95\%$ confidence bound on the probability of satisfying the
finite-horizon objective using the decentralized controller as computed via
$10^6$ samples. We compute $p^+$ in two ways. First, we
approximate $p^+$ without knowledge of the model by fixing the controller policy
that results from learning and producing a $95\%$ confidence bound on the probability of satisfying the objective from $10^6$ samples. Second, we fix the controller policy that results from learning
and compute an optimal strategy for the internal input by dynamic programming. This requires knowledge of the model and is done to validate the results of learning.

\begin{table}[t!]
	\footnotesize
	\caption{\small Results for distributed controller learned by minimax Q-learning on the quantized subsystems.}
	\centering
	\begin{tabular}{ c | c | c | c | c}
		& 
		$p^+$ &
		$p_{\mathrm{low}}$ &
		$\varepsilon$ &
		$p_{\mathrm{sampled}}$
		\\ \hline
		
		\vspace{0.2cm} \textbf{Room}
		& 0.999943   0.999952  $\pm$~0.000014
		& 0.902585   0.902769  $\pm$~0.000272
		& \vspace{0.2cm} 0.004807
		& \vspace{0.01cm} 0.998880  $\pm$~0.000066
		\\ \hline																
		
		\vspace{0.2cm} \textbf{Traffic}
		& 0.996837  0.998878  $\pm$~0.000066
		& 0.932064  0.946167  $\pm$~0.000459
		& \vspace{0.2cm} 0.006571
		& \vspace{0.01cm} 0.999999  $\pm$~0.000002
	\end{tabular}\vspace{0.2cm}
	\label{tab:experiments}
\end{table}

Table~\ref{tab:experiments} shows that on these case studies, the
computed bound on the probability of the decentralized controller
satisfying the finite-horizon objective is within $0.1$ of the
estimated probability using samples for both examples. Additionally, there is a
successful mitigation of the curse of dimensionality versus synthesizing
a centralized controller for the
interconnected system monolithically.
On the room temperature example, the selected quantization parameters result in $n_x = 1000$ states and $n_w = 20$ internal inputs for each
subsystem. Combined with $n_\upsilon = 6$ control inputs and a time horizon
of $\mathcal T=5$, there are $\mathcal T (n_x n_\upsilon + n_x n_\upsilon n_w) = 630,\!000$ total
state-input pairs in the stochastic game which we need to solve to produce
the decentralized controller. For comparison, we can select appropriate quantization parameters which yield the same quantization error in the compositional case, $\frac{1}{2}[(1+\varepsilon)^N\!-\!(1-\varepsilon)^N]$, as in the monolithic case, $\varepsilon$.
The appropriate quantization results in $\hat n_x=9$ states for each individual subsystem. Even with only a few states required in each quantized subsystem, the monolithic approach still needs to reason over $\mathcal T (\hat n_x n_\upsilon)^{20} \approx 2.22\cdot 10^{35}$ state-input pairs to produce a controller. For the road traffic example, there are $n_x = 400$ states, $n_w = 2000$ internal inputs, $n_\upsilon = 2$ control inputs, and a time horizon of $\mathcal{T} = 2$. We require $\hat n_x = 21$ states per subsystem for producing a monolithic controller with the same quantization error as in the compositional case. We get that there are $\mathcal{T}(n_x n_\upsilon + n_x n_\upsilon n_w) = 3,\!201,\!600$ state-input pairs in the stochastic game which we need to solve
s	to produce the decentralized controller, and $\mathcal{T} (\hat n_x n_\upsilon)^{7} \approx 4.61 \cdot 10^{11}$ state-input pairs in the monolithic setting.

\section{Conclusion}
We proposed a \emph{compositional approach} for the policy synthesis of \emph{networks} of unknown stochastic systems using minimax-Q RL. The goal of the policy is to maximize the probability that the system satisfies a logical property. We proposed a lower bound for the probability of satisfaction of a finite-horizon property by the interconnected system based on those of subsystems.
Since automata-based rewards tend to be sparse, we combined our approach with a potential-based \emph{reward shaping} technique and a multi-level discretization to speed up the learning procedure. We demonstrated the effectiveness of the proposed approach by designing the control for two physical case studies. 

\bibliographystyle{alpha}
\bibliography{biblio}

\newpage 

\section{Appendix}

\subsection{Dynamics of Case Studies}
\label{sec:case}

\noindent{\bf Room Temperature Network:} The evolution of the temperatures can be described by an interconnected \mbox{dt-SCS} as  
\begin{align}\label{Room_Nerwork}
\Sigma:{x}(k+1)=A{T_x}(k)+\gamma T_{h}\upsilon(k)+ \beta T_{E}+R\varsigma(k),
\end{align}
where $A$ is a matrix with diagonal elements $a_{ii}=(1-2\psi-\beta-\gamma\upsilon_{i}(k))$, $i\in\{1,\ldots,N\}$, off-diagonal elements $a_{i,i+1}=a_{i+1,i}=a_{1,N}=a_{N,1}=\psi$, $i\in \{1,\ldots,N-1\}$, and all other elements are identically zero.
Parameters $\psi = 0.001$, $\beta = 0.4$, and $\gamma = 0.5$, where $\psi$ is a conduction factor between each two rooms in the network. Moreover, $
T_x(k)=[T_{x_1}(k);\ldots;T_{x_N}(k)]$,
$\upsilon(k)=[\upsilon_1(k);\ldots;\upsilon_N(k)]$, $
\varsigma(k)=[\varsigma_1(k);\ldots;\varsigma_N(k)]$,
$T_E=[T_{e1};\ldots;T_{eN}]$, where $T_i(k)$ takes values in the set $[17,18]$ and $\upsilon_i(k)$ takes values in the finite input set $\{1.1542, 1.1625, 1.1708, 1.1792, 1.1875, 1.1958\}$ made from the center-points of $6$ equal partitions of the interval $[1.15,1.20]$, for all $i\in\{1,\ldots,N\}$.
Furthermore, $R$ is a diagonal matrix with elements $r_{ii} = 0.1, i\in\{1,\dots,N\}$. Outside temperatures are the same for all rooms: $T_{ei}=-1\,^\circ C$, $\forall i\in\{1,\ldots,N\}$, and the heater temperature $T_h=30\,^\circ C$. Now, by introducing $\Sigma_i$ described as
\begin{equation*}
\Sigma_i\!:\!\left\{\hspace{-1.5mm}\begin{array}{l}T_{x_i}(k\!+\!1)=a_{ii}T_{x_i}(k)+\gamma T_{h} \upsilon_i(k)+D_i w_i(k)+\beta T_{ei}+0.1\varsigma_i(k),\\
y_i(k)=T_{x_i}(k),\\
\end{array}\right.
\end{equation*}
one can verify that $\Sigma=\mathcal{I}_g(\Sigma_1,\ldots,\Sigma_N)$ where $D_i = [\psi;\psi]^T$, and $w_i(k) = [y_{i-1}(k);y_{i+1}(k)]$ (with $y_0 = y_N$ and $y_{n+1} = y_1$).

\vspace{0.3cm}
\noindent{\bf Road Traffic Network:} We consider the length of each cell as $0.5$ kilometers ($[km]$), and the
flow speed of the vehicles as $45$ kilometers per hour
($[km/h]$). Moreover, during the sampling time interval $\tau=18$
seconds, it is assumed that $5$ vehicles pass the entry controlled by
the green light,  and half the vehicles exit each cell (ratio denoted
by $\bar q$).
We want to observe traffic density $x_i$ (vehicles per cell) for each cell $i$ of the road. The dynamic of the interconnected system is described by
\begin{align}\label{Road_Network}
\Sigma:x(k+1) = Ax(k)+ \bar B\upsilon(k)+ R\varsigma(k)+G,
\end{align}
where $A$ is a matrix with diagonal elements $ a_{ii}=(1-\frac{\tau \bar\nu_i}{l_i}-\bar q)$, $i\in\{1,\ldots,N\}$, off-diagonal elements $ a_{i+1,i}=\frac{\tau \bar\nu_i}{l_i}$, $i\in \{1,\ldots,N-1\}$, $a_{1,N} = \frac{\tau \bar\nu_N}{l_N}$, and all other elements are identically zero. Moreover, $\bar B$ and $R$ are diagonal matrices with elements $b_{ii} = 5$, and $r_{ii} = 1.7, i\in\{1,\dots,N\}$, respectively, and $G = \mathbf{0}_N$. Furthermore, $ x(k)=[x_1(k);\ldots;x_{N}(k)]$,  $\upsilon(k)=[\upsilon_1(k);\ldots;\upsilon_{N}(k)]$, and $ \varsigma(k)=[\varsigma_1(k);\ldots;\varsigma_{N}(k)]$.
Now by introducing individual cells $\Sigma_i$ described as
\begin{align}\notag
\Sigma_i\!:\!\left\{\hspace{-1.5mm}\begin{array}{l}x_i(k+1) = (1-\frac{\tau \bar\nu_i}{l_i}-q)\,x_i(k) + D_iw_i(k) +5\upsilon_i(k)+ 1.7 \varsigma_i(k),\\
y_i(k)=x_i(k),\\
\end{array}\right.
\end{align}
one can readily verify that $\Sigma=\mathcal{I}_g(\Sigma_1,\ldots,\Sigma_N)$ where
$D_i = \frac{\tau \bar\nu_{i-1}}{l_{i-1}}$ (with $\bar\nu_0 = \bar\nu_N$, $l_0 =
l_N$) and $w_i(k) = y_{i-1}(k)$ (with $y_0 = y_N$).
Observe that 
$X_i = W_i = [0~~20]$ and $\upsilon_i(k) \in \{0, 1\}$ for $i\in\{1,\dots,N\}$.

\end{document}